\begin{document}

\title[]{Spekkens' toy model in all dimensions and its relationship with stabilizer quantum mechanics}

\author{Lorenzo Catani\textsuperscript{1} and Dan E. Browne\textsuperscript{1}}

\address{\textsuperscript{1}University College London, Physics and Astronomy department, Gower St, London WC1E 6BT, UK}
\ead{lorenzo.catani.14@ucl.ac.uk}

\begin{abstract}
Spekkens' toy model is a non-contextual hidden variable model with an epistemic restriction, a constraint on what an observer can know about reality. The aim of the model, developed for continuous and discrete prime degrees of freedom, is to advocate the epistemic view of quantum theory, where quantum states are states of incomplete knowledge about a deeper underlying reality. Many aspects of quantum mechanics and protocols from quantum information can be reproduced in the model. 

In spite of its significance, a number of aspects of Spekkens' model remained incomplete. Formal rules for the update of states after measurement  had not been written down, and the theory had only been constructed for prime-dimensional, and infinite dimensional systems. In this work, we remedy this, by deriving measurement update rules, and extending the framework to derive models in all dimensions, both prime and non-prime.

Stabilizer quantum mechanics is a sub-theory of quantum mechanics with restricted states, transformations and measurements. First derived for the purpose of constructing error correcting codes, it now plays a role in many areas of quantum information theory. Previously, it had been shown that Spekkens' model was operationally equivalent in the case of infinite and odd prime dimensions. Here, exploiting known results on Wigner functions, we extend this to show that Spekkens' model is equivalent to stabilizer quantum mechanics in all odd dimensions, prime and non-prime. This equivalence provides new technical tools for the study of technically difficult compound-dimensional stabilizer quantum mechanics.

\end{abstract}

\maketitle

\section{Introduction}

A long tradition of research, starting from the famous ``EPR paper'' \cite{EPR}, has consisted of analysing quantum theory in terms of hidden variable models, with the aim of obtaining a more intuitive understanding of it. This has led to some crucial results in foundation of quantum mechanics, namely Bell's and Kochen-Specker's no-go theorems \cite{Bell}\cite{Kochen}. Nowadays a big question is whether to interpret the quantum state according to the ontic view, \emph{i.e.} where it completely describes reality, or to the epistemic view, where it is a state of incomplete knowledge of a deeper underlying reality which can be described by the hidden variables.
In 2005, Robert Spekkens \cite{Spek1} constructed a non-contextual hidden variable model to support the epistemic view of quantum mechanics. The aim of the model was to replace quantum mechanics by a hidden variable theory with the addition of an epistemic restriction (\emph{i.e.} a restriction on what an observer can know about reality). The first version of the model \cite{Spek1} was developed in analogy with quantum bits (qubits), with 2-outcome observables. Despite the simplicity of the model, it was able to support many phenomena and protocols that were believed to be intrinsically quantum mechanical (such as dense coding and teleportation). Spekkens' toy model has influenced much research over the years: \emph{e.g.} people provided a new notation for it \cite{Pusey}, studied it from the categorical point of view \cite{Coecke}, used it for quantum protocols \cite{Damian}, exploited similar ideas to find a classical model of one qubit \cite{Pawel}, and tried to extend it in a contextual framework \cite{Larsson}. Also Spekkens' toy model addresses many key issues in quantum foundations: whether the quantum state describes reality or not, finding a derivation of quantum theory from intuitive physical principles and classifying the inherent non-classical features.

A later version of the model \cite{Spek2}, which we will call Spekkens' Theory (ST), introduced a more general and mathematically rigorous formulation,  extending the theory to systems of discrete prime  dimension, where \textit{dimension} refers to the maximum number of distinguishable measurement outcomes of observables in the theory, and continuous variable systems.
Spekkens called these classical statistical theories with epistemic restrictions as \emph{epistricted statistical theories}. By considering a particular epistemic restriction that refers to the symplectic structure of the underlying classical theory, the \textit{classical complementarity principle}, theories with a rich structure can be derived. Many features of quantum mechanics are reproduced there, such as Heisenberg uncertainty principle, and many protocols introduced in the context of quantum information, such as teleportation. However, as an intrinsically non-contextual theory, it cannot reproduce quantum contextuality (and the related Bell non-locality), which, therefore arises as the signature of quantumness. Indeed, for odd prime dimensions and for continuous variables, ST was shown to be operationally equivalent to sub-theories of quantum mechanics, which Spekkens called quadrature quantum mechanics.

In the finite dimensional case quadrature quantum mechanics is better known as \emph{stabilizer quantum mechanics} (SQM). The latter is a sub-theory of quantum mechanics developed for the description and study of quantum error correcting codes \cite{GottesPhD} but subsequently playing a prominent role in many important quantum protocols. In particular, many studies of quantum contextuality can be expressed in the framework of SQM, including the GHZ paradox \cite{GHZ} and the Peres-Mermin square \cite{Mermin}\cite{Peres}. This exposes a striking difference between odd and even dimensional SQM. Even-dimensional SQM contains classical examples of quantum contextuality while odd-dimensional SQM exhibits no contextuality at all, necessary for its equivalence with Spekkens' Theory. While developed for qubits, SQM was rapidly generalised to systems of arbitrary dimension,  \cite{GottesPhD}. However, for non-prime dimensions SQM remains poorly characterised and little studied (recent progress in this was recently reported in \cite{Gottes2}).

In spite of its importance, there remain some important aspects of Spekkens' Theory which have not yet been characterised and studied. First of all, all prior work on ST have only considered systems where the dimension is prime. Furthermore, while Spekkens' recent work strengthens the mathematical foundations of the model \cite{Spek2}, one key part of the theory has not yet been described in a general and rigorous way. These are the measurement update rules, the rules which tell us how to update a state after a measurement has been made. In prior work, these rules, and the principles behind them have been described but not formalised. 

In this paper, we complete this step, deriving a formal description of the measurement rules for prime-dimensional ST. Having done so, we now have a fully formal description of the model, which can be used as a basis to generalise it. We do so, generalising the framework from prime-dimensions to arbitrary dimensions and finding that it is the measurement update rule, where the richer properties of the non-prime dimension can be seen, which provides the key to this generalisation.

Having developed ST for all finite dimensions, we then focus on the general odd-dimensional case, and prove that in all odd-dimensional cases Spekken's Theory is equivalent to Stabilizer Quantum Mechanics. The bridge between SQM and ST is given by Gross' theory (GT) of discrete Wigner function \cite{Gross}. Unlike most other studies, Gross' treatment considered both prime and non-prime cases in its original formulation.

To summarise the contributions of this paper,  we provide a compete formulation of ST in \emph{all} discrete dimensions, even and odd, endowed with the updating rules for sharp measurements both for prime and non-prime dimensional systems. We extend the equivalence between ST and SQM via Gross' Wigner functions to \emph{all} odd dimensions,  and  find the measurement updating rules also for the Wigner functions. 
The above equivalence allows us to shed light onto a complete characterisation of SQM in non-prime dimensions.
Finally the incredibly elegant analogy between the three theories in odd dimensions: ST, SQM and GT, is depicted in terms of their updating rules.

The remainder of the paper is structured as follows.
In section 1 we precisely and concisely describe the original framework of Spekkens' theory, in particular we define ontic and epistemic states, observables and the rule to obtain the outcome of the measurement of an observable given a state.
In section 2 and 3 we state and prove the updating rules in Spekkens' theory respectively for prime and non-prime dimensional systems. We prove these in two steps: first considering the case in which the state and measurement commute, and then the more general (non-commuting) case.  The mathematical difference between the set of integers modulo \emph{d}, for \emph{d} prime and non-prime, results in having two levels of observables: the fundamental ones - the fine graining observables - and the ones that encode some degeneracy - the coarse-graining observables. The latter are problematic and are only present in the non-prime case. This is the reason why we need a different formulation in the two cases. The updating rules for the coarse graining observables will need a step in which the coarse-graining observables are written in terms of fine graining ones.
In section 4 we state the equivalence of ST and SQM via Gross' Wigner functions in all odd dimensions. We also express the already found updating rules in terms of Wigner functions and we use them to depict the elegant analogies between these three theories.  
The paper ends with a discussion of the possible applications of our achievements and with a summary of the main results.

\newpage \section{Spekkens' theory}
\label{SecSpek}

We start by reviewing and introducing Spekkens' theory for prime-dimensional systems. We take a slightly different approach to \cite{Spek1} and \cite{Spek2}. ST is a hidden variable theory, where the hidden variables are points in a phase space. The state of the hidden variables is called the \textit{ontic state}. In Spekkens' model the ontic state is hidden and can never be known by an experimenter. The experimenter's best description of the system is the \textit{epistemic state}, representing a probability distribution over the points in phase space.

For a single $d$-dimensional system, a phase space can be defined via the values of two conjugate fiducial variables, which we label $X$ and $P$, in analogy to position and momentum. $X$ and $P$ can each take any value between $0$ and $d-1$, and a single ontic state of the system is specified by a pair $(x,p)$, where $x$ is the value of $X$ and $p$ is the value of $P$. This phase space is equivalent to the space $\mathbb{Z}_d^{2}$. In figure \ref{SystemEx}  three examples of epistemic states of one trit ($d=3$) are depicted, where $X$ and $P$ are represented by the rows and columns in the phase space $\mathbb{Z}_3.$

\begin{figure}[h!]
\centering
\subfloat[][\label{a}]
{\includegraphics[width=.45\textwidth,height=.18\textheight]{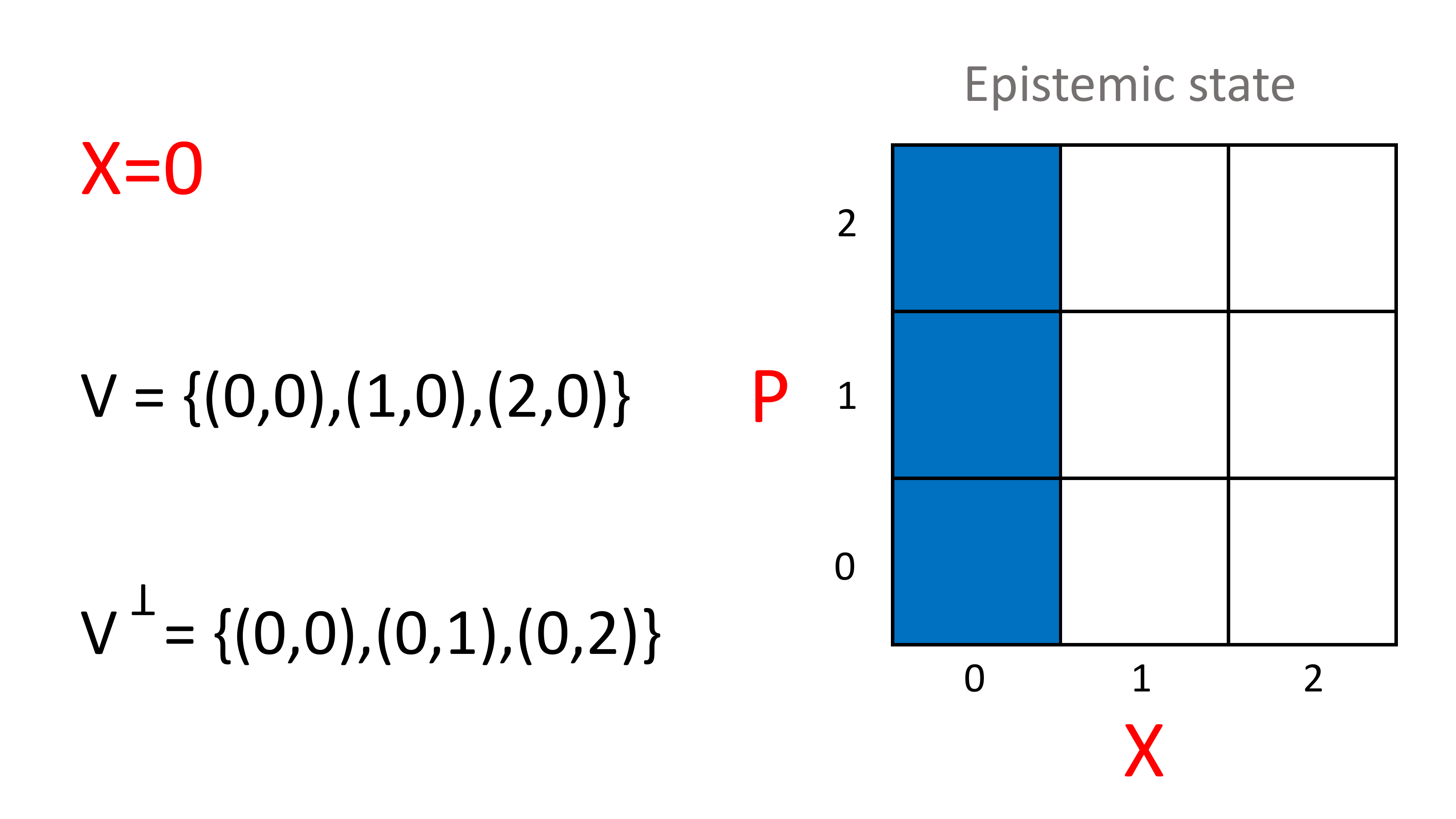}}\hfil
\subfloat[][\label{b}]
{\includegraphics[width=.45\textwidth,height=.18\textheight]{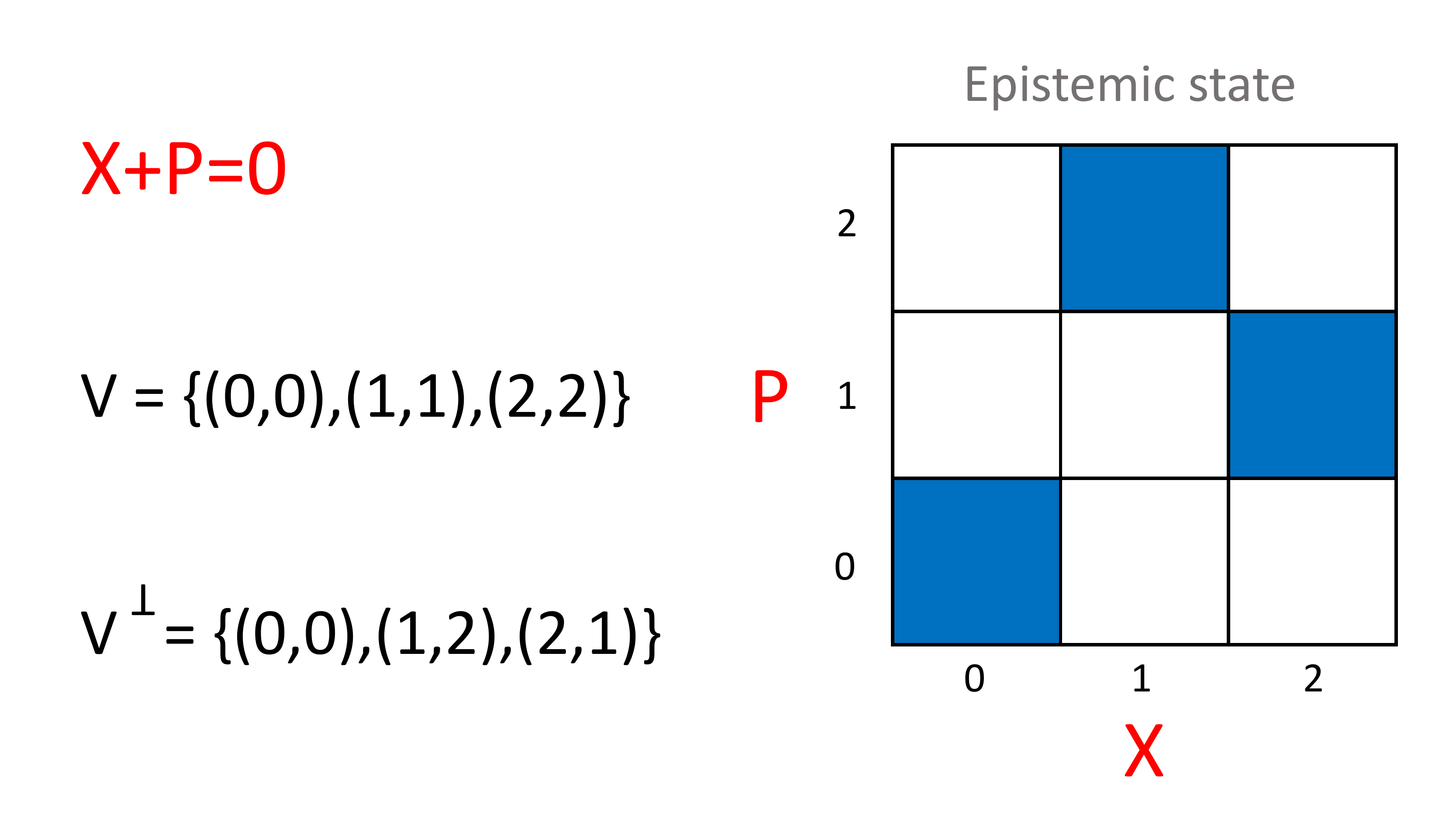}}\hfil
\subfloat[][\label{c}]
{\includegraphics[width=.45\textwidth,height=.18\textheight]{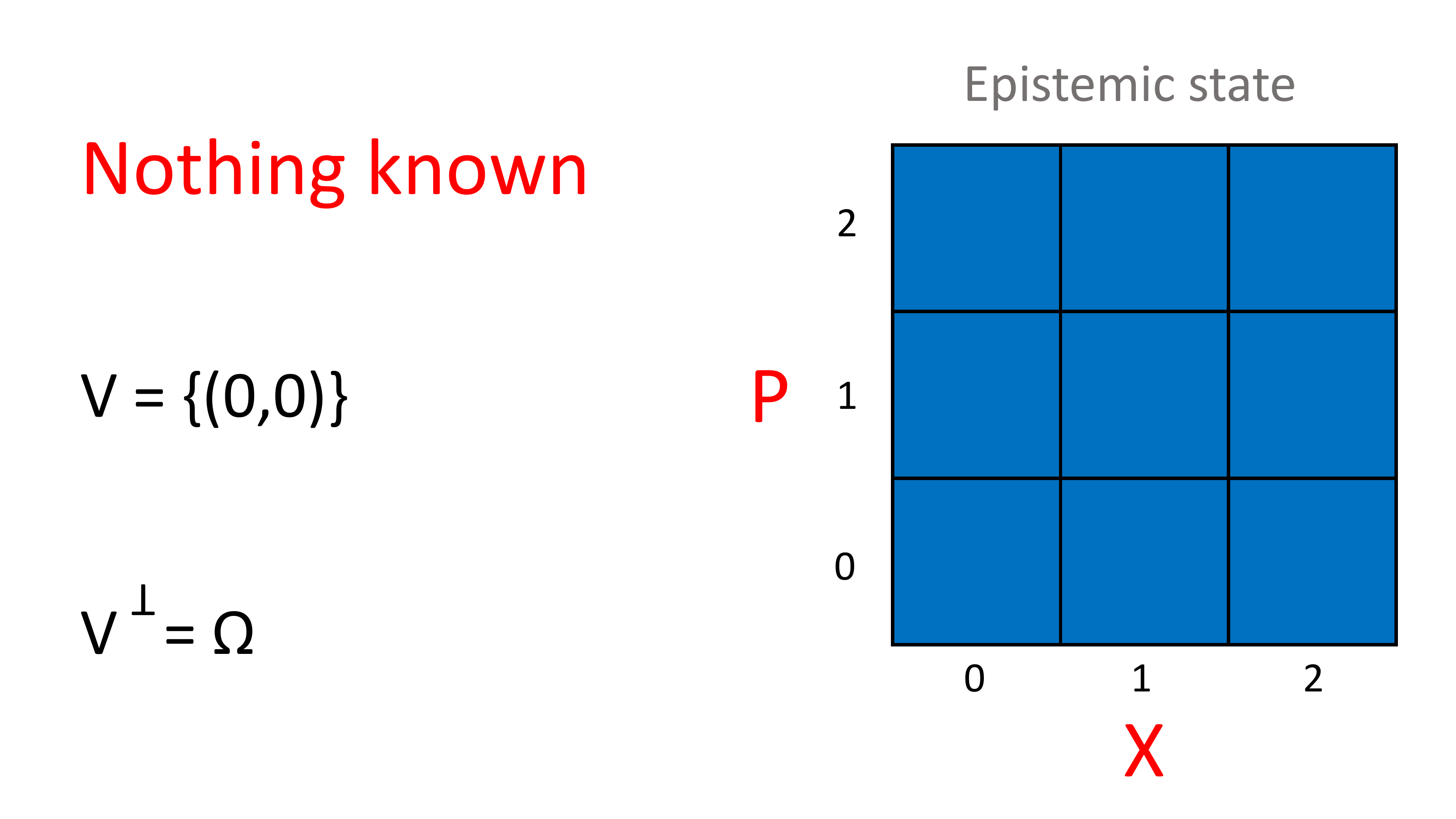}}

\caption{\footnotesize{\textbf{One trit Spekkens states examples.} In the figures above we consider the case of one trit and we find the isotropic subspaces $V$ and $V^{\perp}$ and the corresponding Spekkens epistemic state. In these cases the observables(linear functionals) are always of the form $aX+bP=0,$ where $a,b\in \mathbb{Z}_3.$ Moreover in the above examples we assume $\bold{w}=\bold{0}.$ In figure \ref{a} the observer only knows $X=0$ and this implies that the generator of $V$ is $\bold{\Sigma}=(1,0).$ The subspace $V^{\perp}$ can be simply calculated from $V$ by definition. In figure \ref{b} the observer only knows that $X+P=0$ and this implies the generator of $V$ to be $\bold{\Sigma}=(1,1).$ In figure \ref{c} nothing is known. The subspace $V$ is generated by $\bold{\Sigma}=(0,0)$ only. Here $V^{\perp}$ coincides with the whole phase space $\Omega.$
Note that it is not possible to have $V^{\perp}=(0,0),$ because this would correspond to have the knowledge of the ontic state.}}

\label{SystemEx}
\end{figure}

A collection of $n$ systems is described by $n$ pairs of independent conjugate variables $X_j$ and $P_j$, with $j\in{0,\dots,n-1}$ a label indexing the systems. The phase space, denoted by $\Omega$, is simply the cartesian product of single system phases spaces and thus $\Omega\equiv(\mathbb{Z}_d)^{2n}.$\footnote{The dimension $d$ is any positive number, and we will not, in general, restrict it to odd or even, prime or non-prime, unless specified.} 

The ontic state of the $n$-party system represents a set of values for each fiducial observables $X_j$ and $P_j$. In other words, an ontic state is denoted by a point in the phase space $\lambda\in\Omega.$ We call  $X_j$ and $P_j$ observables because they correspond to  measurable quantities, and assume that these observables are sufficient to uniquely define the ontic state. 
We can refer to $\Omega$ as a vector space where the ontic states are vectors (bold characters) whose components (small letters) are the values of the fiducial variables: \begin{equation}\bold{\lambda}=(x_0,p_0,x_1,p_1,\dots,x_{n-1},p_{n-1}).\end{equation}

Not only are the fiducial variables important for defining the state space, they also generate the set of all general observables in the theory. A generic observable, denoted by $\Sigma,$ is defined by any linear combination of fiducial variables: \begin{equation}\Sigma=\sum_m(a_mX_m+b_mP_m),\end{equation} where $a_m,b_m\in \mathbb{Z}_d$ and $m\in{0,\dots, n-1}.$ 
The observables inhabit the dual space $\Omega^{*},$ which is isomorphic to $\Omega$ itself. 
Therefore we can define them as vectors, in analogy with ontic states, \begin{equation}\bold{\Sigma}=(a_0,b_0,a_1,b_1,\dots,a_{n-1},b_{n-1}).\end{equation}
The formalism provides a simple way of \emph{evaluating} the outcome $\sigma$ of any observable measurement $\Sigma$ given the ontic state $\lambda,$ \emph{i.e.} by computing their \emph{inner product}: \begin{equation}\label{linear}\sigma=\Sigma^T\lambda=\sum_j(a_jx_j+b_jp_j),\end{equation} where all the arithmetic is over $\mathbb{Z}_d.$

Spekkens' theory gains its special properties, and in particular, its close analogy with stabilizer quantum mechanics via the imposition of an \textit{epistemic restriction}, a restriction on what an observer can know about the ontic state of a system. The observer's best description is called the \emph{epistemic state}, which is represented by a probability distribution $p(\lambda)$ over $\Omega$ (figure \ref{SystemEx}).

The epistemic restriction of ST is called \emph{classical complementarity principle} and it states that two observables can be simultaneously measured only when their Poisson bracket is zero. This is motivated by Stabilizer Quantum Mechanics, since it captures the condition for two observables in SQM to \textit{commute}. We shall adopt the quantum terminology here, and say that if the Poisson bracket between two observables is zero they commute. This can be simply recast in terms of the \emph{symplectic inner product}:
\begin{equation}\left\langle\bold{\Sigma_1},\bold{\Sigma_2}\right\rangle\equiv\bold{\Sigma_1}^TJ\bold{\Sigma_2}=0,\end{equation} where $J= \bigoplus_{j=1}^{n} \begin{bmatrix} 0 & 1 \\ -1 & 0 \end{bmatrix}_j$ is the symplectic matrix.
Note that each observable $\Sigma_j$ partitions $\Omega$ into $d$ subsets, each of the form $(span\{\Sigma_j\})^{\perp}+w,$ where $w$ is any ontic state such that $\bold{\Sigma_j}^T\cdot\bold{w}=\sigma_j.$

Let us now consider sets of variables that can be jointly known by the observer. Such variables commute, and represent a sub-space of $\Omega$ known as an \emph{isotropic subspace}. We denote the subspace of the known variables as $V=span\{\Sigma_1,\dots,\Sigma_n\}\subseteq\Omega,$ where $\Sigma_i$ denotes one of the generators (commuting observables) of $V$.

Sets of known commuting variables are important as these define the epistemic states within the theory. In particular, we can define an epistemic state by the set of  variables $V$ that are known by the observer and also the values ${\sigma_1,\dots,\sigma_n}$ that these variables take.

This means that $\bold{\Sigma}_j^T\cdot \bold{w}=\sigma_j,$ where $w\in V$ is an ontic state that evaluates the known observables. We will call $w$ a \emph{representative ontic state} for the epistemic state. More precisely we can state the following theorem.
\newtheorem{EpistemicTheorem}{Proposition}
\begin{EpistemicTheorem}\label{EpistemicTheorem}
The set of ontic states consistent with the epistemic state described by $(V,\bold{w})$ is \begin{equation}V^{\perp}+\bold{w},\end{equation} where the perpendicular complement of $V$ is, by definition, $V^{\perp}=\{a\in\Omega \; |\; \bold{a}^T \bold{b}=0\; \forall\; b\in V \}.$
\end{EpistemicTheorem}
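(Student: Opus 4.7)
The plan is a straightforward double inclusion, relying on the linearity of the evaluation rule $\sigma = \mathbf{\Sigma}^T \boldsymbol{\lambda}$ from equation~(\ref{linear}) and on the fact that $\mathbf{w}$ is, by hypothesis, a representative ontic state for $(V,\mathbf{w})$, i.e.\ $\mathbf{\Sigma}_j^T \mathbf{w} = \sigma_j$ for each generator of $V$.

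First I would rephrase the statement algebraically. An ontic state $\boldsymbol{\lambda}$ is \emph{consistent} with the epistemic state $(V,\mathbf{w})$ precisely when it reproduces the known values of every observable in $V$, that is, $\mathbf{\Sigma}^T \boldsymbol{\lambda} = \mathbf{\Sigma}^T \mathbf{w}$ for every $\mathbf{\Sigma} \in V$. Before going further I would note that, by linearity of the inner product and of the generating relation $\mathbf{\Sigma} = \sum_i c_i \mathbf{\Sigma}_i$, it suffices to impose this condition on a chosen set of generators $\{\mathbf{\Sigma}_1,\dots,\mathbf{\Sigma}_n\}$ of $V$; any linear combination is then fixed automatically.

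Now the two inclusions are essentially a single calculation. For $(\supseteq)$, given $\boldsymbol{\lambda} = \mathbf{a} + \mathbf{w}$ with $\mathbf{a} \in V^{\perp}$, I compute $\mathbf{\Sigma}^T \boldsymbol{\lambda} = \mathbf{\Sigma}^T \mathbf{a} + \mathbf{\Sigma}^T \mathbf{w} = 0 + \sigma = \sigma$ for every $\mathbf{\Sigma} \in V$, using the defining property $\mathbf{a}^T \mathbf{b} = 0$ for all $\mathbf{b} \in V$. Hence $\boldsymbol{\lambda}$ is consistent. For $(\subseteq)$, if $\boldsymbol{\lambda}$ is consistent then $\mathbf{\Sigma}^T(\boldsymbol{\lambda} - \mathbf{w}) = \mathbf{\Sigma}^T \boldsymbol{\lambda} - \mathbf{\Sigma}^T \mathbf{w} = \sigma - \sigma = 0$ for every $\mathbf{\Sigma} \in V$, so $\boldsymbol{\lambda} - \mathbf{w} \in V^{\perp}$ and therefore $\boldsymbol{\lambda} \in V^{\perp} + \mathbf{w}$.

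I do not foresee a serious obstacle here; the statement is really just the content of the epistemic restriction made explicit in linear-algebraic language over $\mathbb{Z}_d$. The only point requiring a small amount of care is bookkeeping between $\Omega$ and its dual $\Omega^*$: the definition of $V^{\perp}$ uses the standard inner product $\mathbf{a}^T \mathbf{b}$, and this is precisely the same pairing that defines observable evaluation in equation~(\ref{linear}), so the two notions of ``annihilation'' agree and the argument closes cleanly without needing to invoke the symplectic form.
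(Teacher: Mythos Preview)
Your proof is correct and follows essentially the same approach as the paper: both arguments use linearity of the evaluation pairing to reduce consistency with $(V,\mathbf{w})$ to the equation $\mathbf{\Sigma}_j^T(\boldsymbol{\lambda}-\mathbf{w})=0$, identifying $\boldsymbol{\lambda}-\mathbf{w}$ with an element of $V^{\perp}$. Your version is slightly more explicit in writing out both inclusions and in remarking that the standard inner product (not the symplectic one) is the relevant pairing, but the underlying idea is identical.
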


\begin{proof}
Let us start by considering the set of ontic states $\lambda$ such that $\bold{\Sigma}_j^T\bold{\lambda}=0 \; \forall j.$ By definition of perpendicular complements, the ontic states $\lambda$ belong to $V^{\perp}.$ If we consider an ontic state $w$ such that $\bold{\Sigma}_j^T \bold{w}=\sigma_j,$ then $\bold{\Sigma}_j^T (\bold{\lambda}+\bold{w})=\sigma_j.$ Therefore the ontic states consistent with the epistemic state associated to $(V,\bold{w})$ are the ones of the kind $\lambda+w,$ \emph{i.e.} the ones belonging to $V^{\perp}+\bold{w}.$ 
\end{proof}
Note that the presence of $\bold{w}\neq \bold{0}$ simply implies a translation, that is why we can also call it \emph{shift vector}.

By assumption the probability distribution associated to the epistemic state $(V,\bold{w})$ is uniform (indeed we expect all possible ontic states to be equiprobable), so the probability distribution of one of the possible ontic states in the epistemic state $(V,\bold{w})$ is \begin{equation}\label{distribution}P_{(V,\bold{w})}(\bold{\lambda})=\frac{1}{d^n}\delta_{V^{\perp}+\bold{w}}(\lambda),\end{equation} where the delta is equal to one only if $\bold{\lambda}\in V^{\perp}+\bold{w}$ (note this means that the theory is a \emph{possibilistic} theory). 
In figure \ref{SystemEx} we specify the subspaces $V$ and $V^{\perp}$ in three different examples of epistemic states of one trit.

We can sum up our approach to Spekkens' model as follows:
\begin{enumerate}
\item Start from the intuitive (physically justified) formula \eqref{linear} that relates observables $\Sigma_j$, ontic states $\lambda$ and outcomes $\sigma_j$.
\item Epistemic restriction: the compatible observables are the ones whose symplectic inner product is zero.
\item Compute the shift vector $\bold{w}.$ This allows us to shift back the set of points $\bold{\lambda}$ to obtain a subspace.
\item The set of ontic states compatible with the epistemic state $(V,\bold{w})$ is $V^{\perp}+\bold{w},$ where $V$ is the isotropic subspace spanned by the observables $\Sigma_j$ (the set of known variables).
\end{enumerate}
We say that this approach is physically intuitive because we start with equation \eqref{linear}, which is physically motivated and states, observables and the corresponding outcomes are defined in terms of it. Equation \eqref{linear} also allows us to see that the shift comes from the need to recover the subspace structure.

\section{Updating rules - \emph{prime} dimensional case} 
The formulation of ST in \cite{Spek2}, made for prime (and infinite) dimensional systems and described in the previous section, does not provide a full treatment of the transformative aspect of measurements, \emph{i.e.} how the epistemic state has to be updated after a measurement procedure. In the following we will provide a proper formalization of it, and in the next section we will generalise the formalism to all dimensions, non-prime too.

The set of integers modulo $d$ shows different features depending on $d$ being prime or not. In particular in the non-prime case it is not always possible to uniquely define the inverse of a number. The consequences of this will directly affect the updating rules. In particular the possible observables sometimes will not show full spectrum: some outcomes will not be possible because they would derive from arithmetics involving numbers with not well-defined inverses. This will divide the set of possible observables in two categories depending on whether they have full spectrum or not. We start from the prime case where problematic observables are not present because inverses always exist.

Like in quantum theory, duality in the description of states and measurements characterises ST. This means that we can represent the elements of a measurement $\Pi$ in an epistemic-state way, $(V_{\Pi},\bold{r}),$ where we can go from one element of the measurement to the other by simply shifting the representative ontic vector $\bold{r}$ (see figure \ref{meas}). In ST the measurement process corresponds to the process of \emph{learning} some information (\emph{aka} asking questions) about the ontic state of the system. According to the classical complementarity principle only the observables that are compatible (\emph{i.e.} Poisson-commute) with the state of the system can be learned (jointly knowable). This means that the state after measurement will be given by the generators of the state before the measurement and the generators of the measurement which are compatible with it.\footnote{As an abuse of language we here talk of generators of a state meaning the orthogonal basis set that generates the subspace of known variables associated with the state.}
It is then fundamental to understand how compatible sets of ontic states (the isotropic subspaces of known variables $V$ and their perpendicular $V^{\perp}$) change when independent observables are added and removed from the set of known variables $V$.

\begin{figure}[h!]
\centering
{\includegraphics[width=.45\textwidth,height=.18\textheight]{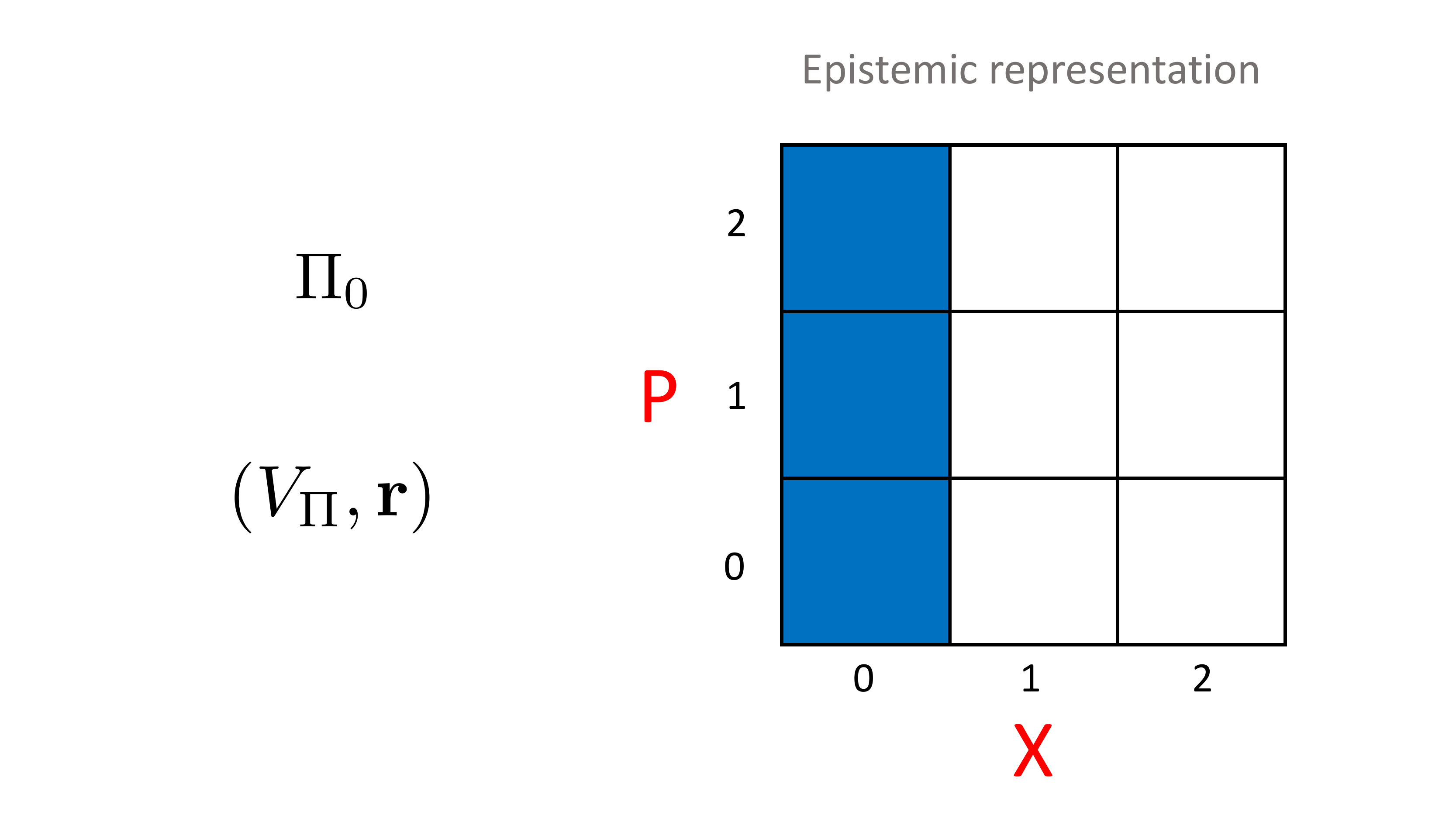}}\hfil
{\includegraphics[width=.45\textwidth,height=.18\textheight]{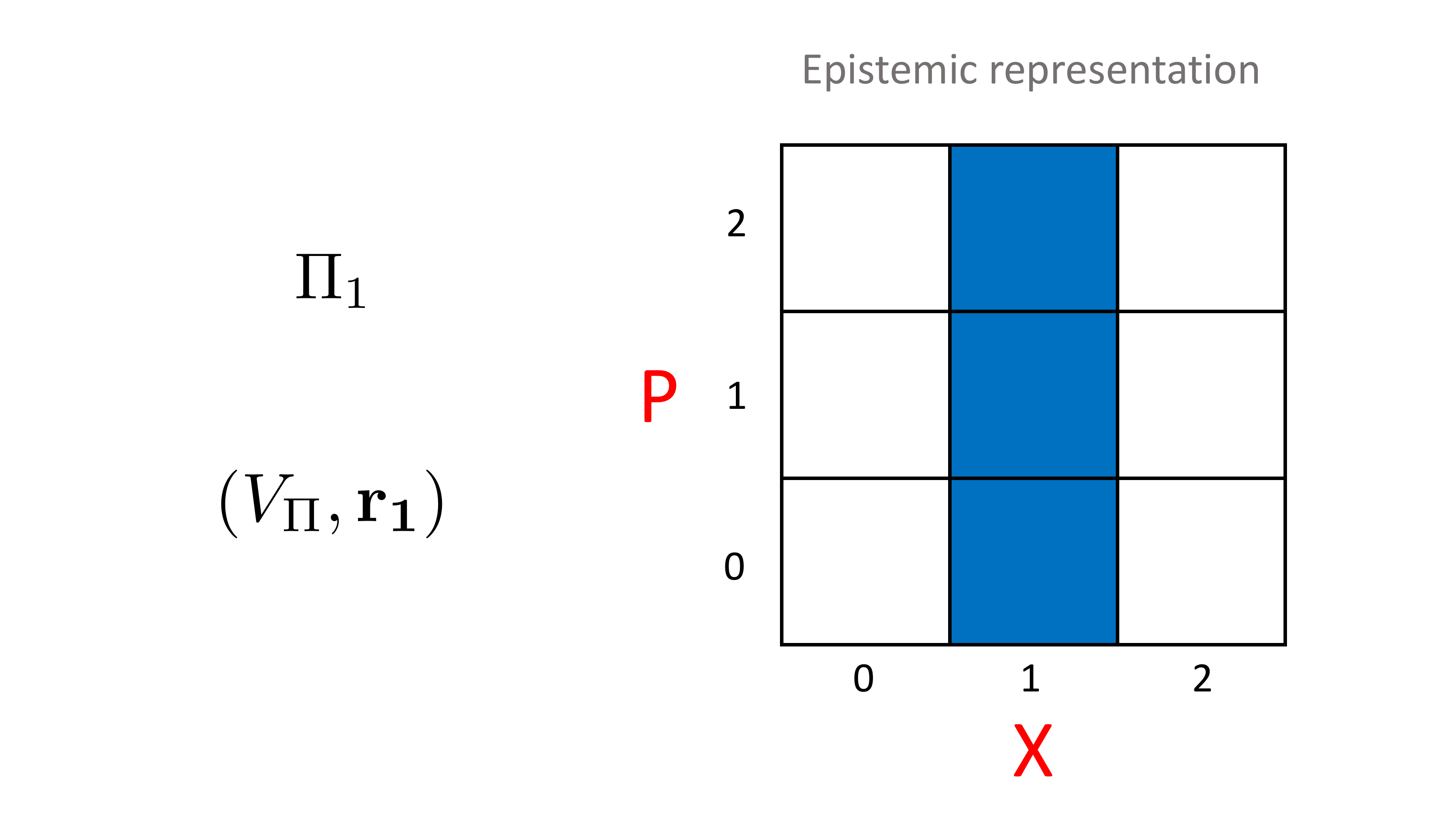}}\hfil
{\includegraphics[width=.45\textwidth,height=.18\textheight]{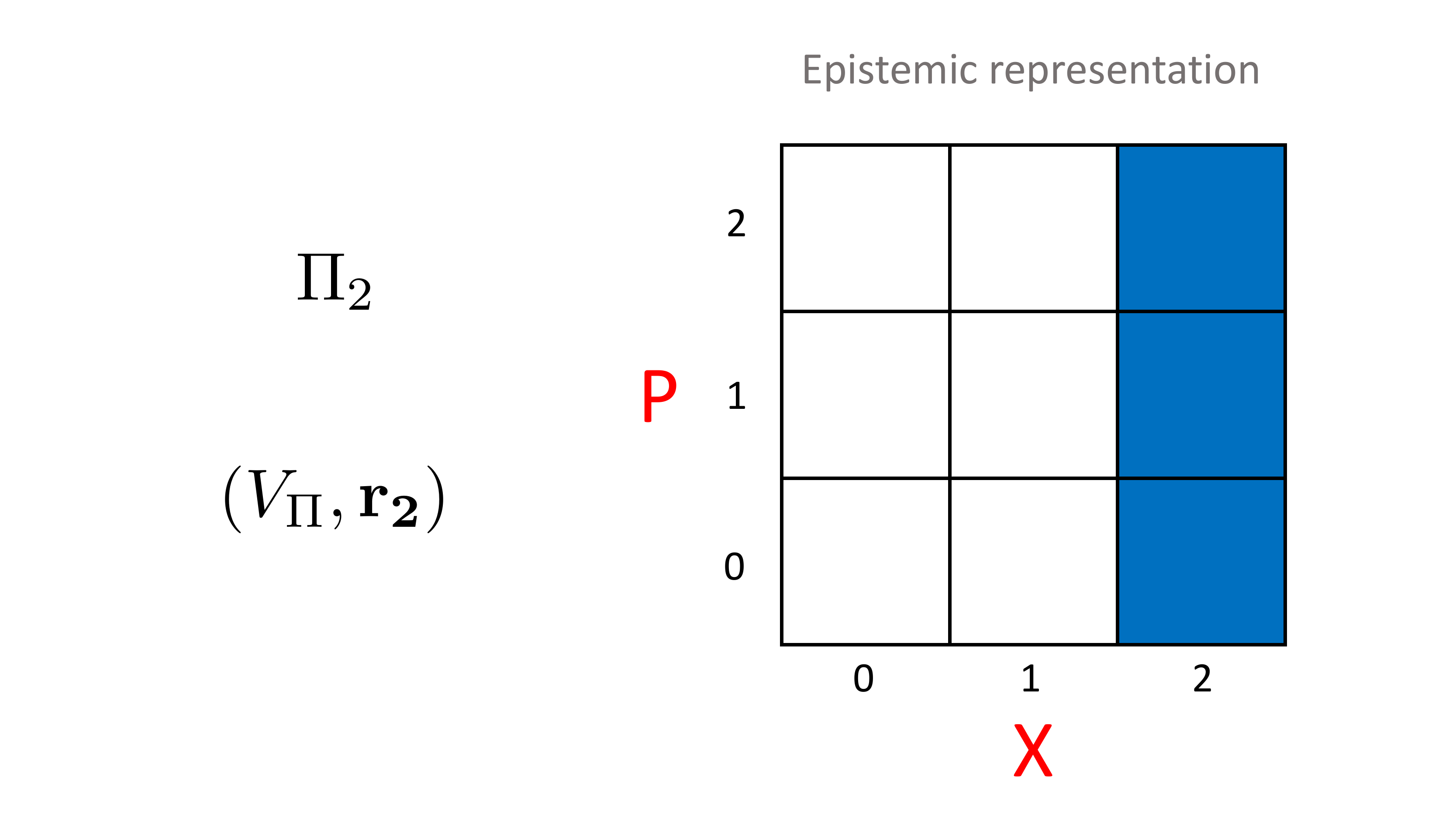}}

\caption{\footnotesize{\textbf{Epistemic representation of a measurement.} The elements of the measurement $\Pi$ can be represented as epistemic states. This duality is present also in quantum theory. The elements of the measurement $\Pi_0,\Pi_1,\Pi_2$ can be thought as the analogue of the projectors $\{\ket{0}\bra{0}, \ket{1}\bra{1}, \ket{2}\bra{2}\}.$  We can always go from one element to the other by shifting the representative ontic vector. In the above case we can go, for example, from $\Pi_0$ to $\Pi_1$ by adding to $\bold{r}=(0,0)$ the vector $(1,0),$ thus getting $\bold{r}_1=(1,0).$ The example above shows the measurement corresponding to asking the question "what is the value of the variable X?" about the ontic state of the system. }}

\label{meas}
\end{figure}

\subsection{Adding and removing generators to/from V}\label{add}

\begin{enumerate}

\item Let us start with the case of \emph{adding} a generator $\Sigma'$ to the set of generators of $V=span\{\Sigma_1,\dots,\Sigma_n\}.$  We assume that $\Sigma'$ is linear independent with respect to the set spanned by the $\Sigma_j.$ Let us see what happens to $V^{\perp}.$
The subspace $V$ after the addition becomes \begin{equation}\label{adding}V'=V\oplus span\{\Sigma'\}.\end{equation}
By definition the direct sum of two subspaces $A\oplus B$ returns a subspace such that for each $a\in A$ and $b\in B$, the sum $a+b$ belongs to $A\oplus B.$ The direct sum of two subspaces is a subspace.
We are interested in the orthogonal complement of a direct sum. It is well known that $(A\oplus B)^{\perp}=A^{\perp}\cap B^{\perp}.$
This means that by adding a generator to $V,$ its perpendicular $V^{\perp}$ is given by \begin{equation}\label{intersec}V'^{\perp}=V^{\perp}\cap (span\{\Sigma'\})^{\perp}.\end{equation} Note that $V'^{\perp}$ is smaller than $V^{\perp}.$

\item We now analyse what happens if we \emph{remove} a generator, say $\Sigma_n,$ from the set of generators of $V.$ This means that now $V'=span\{\Sigma_1,\dots, \Sigma_{n-1}\}.$  The set $V^{\perp}$ is clearly contained in $V'^{\perp},$ since any vector orthogonal to all elements of $V$ must also be orthogonal to all elements of $V'.$
By definition, the set $V'^{\perp}$ is composed by all the ontic states $\lambda$ such that $\bold{\Sigma}_j^T \bold{\lambda}=0$ for all $j<n,$ but $\bold{\Sigma}_n^T \bold{\lambda}\neq 0.$
This means that we need to remove the constraint $\bold{\Sigma}_n^T \bold{\lambda} = 0$ to enlarge $V^{\perp}$ to $V'^{\perp},$ \emph{i.e.} 
 we simply need to add the ontic states $\lambda'=c\gamma$ to $V^{\perp},$ where $c\in\mathbb{Z}_d\neq0$ and $\gamma$ is a vector such that $\Sigma_n^T\gamma=1.$ 
Indeed this implies that \[\Sigma_n^T(\bold{\lambda}+\bold{\lambda}')=\Sigma_n^T(\bold{\lambda}+c\gamma)=0+c\neq 0.\] 
In \emph{prime} dimensions $\gamma$ uniquely exists and it corresponds to $k^{-1}\Sigma_n,$ where $k=\Sigma_n^T\Sigma_n.$ Indeed the inverse of an integer $k\in\mathbb{Z}_d\neq 0$ always uniquely exists if $d$ is a prime number. 
The formula for $V'^{\perp}$ then reads
\begin{equation} V'^{\perp}=\bigcup_c (V^{\perp}+ ck^{-1}\Sigma_n)\equiv \bigcup_{w_n\in V_n}(V^{\perp}+w_n) = V^{\perp}\oplus V_n,\end{equation} where the addition of $+ w_n$ means that the whole set $V^{\perp}$ is shifted by $w_n,$ and $V_n=span\{\Sigma_n\}.$ 
The previous trick in general works as follows. Given the ontic state $\lambda,$ the observable $\Sigma$ and the outcome $\sigma$ associated with them, \emph{i.e.} $\bold{\Sigma}^T\bold{\lambda}=\sigma,$ then it is possible to shift the value $\sigma$ by a constant $k$ such that $\Sigma_n^T\Sigma_n=k,$ by only adding $\Sigma$ itself to the ontic state: \begin{equation}\label{identity}\Sigma^T(\bold{\lambda}+\Sigma)=\sigma +\Sigma^T\Sigma=\sigma + k.\end{equation}
Note that the above identity allows us to change the value of the outcome associated with an ontic state by a constant factor (that we can also choose) without affecting any commuting observable (in this case $\Sigma$).

\end{enumerate}

\subsection{Measurement updating rules} \label{uprules}

We now want to find the updating rules for the state $(V,\bold{w})$ of a prime dimensional system when we perform a measurement $(V_{\Pi},\bold{r})$ on it. We will consider $V_{\Pi}$ being spanned by the generators denoted as $\Sigma'_j.$ The representative ontic vector associated to the measurement, $\bold{r},$ is such that, by definition, $\bold{\Sigma'}_j^T\bold{r}=\sigma'_j,$ where the $\sigma'_j$ are the outcomes associated with the measurement.   
The subspace of known variables $V$ can be written in terms of the sets generated by the generators Poisson-commuting with all the $\Sigma'_j,$ $V_{commute},$ and non-commuting ones, $V_{other}.$ According to this definition $V_{commute}$ will always be a subspace. We cannot state the same for $V_{other},$ since the null vector does not belong to it. For this reason we augment $V_{other}$ with the null vector in order to create a subspace.
This implies that we can decompose $V$ as \begin{equation}V=V_{commute}\oplus V_{other}.\end{equation}
We can also prove the following lemma.
\newtheorem{LemmaNonCommuting}{Lemma}
\begin{LemmaNonCommuting}\label{LemmaNonCommuting}
The subspace $V_{other}$ has dimension $m,$ where $m$ is the number of non-commuting generators of the measurement with the state.
\end{LemmaNonCommuting}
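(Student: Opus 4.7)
The plan is to establish the dimension of $V_{other}$ via rank-nullity applied to the symplectic pairing between $V$ and $V_\Pi$. First I would observe that the decomposition $V = V_{commute} \oplus V_{other}$ actually depends on the chosen set of generators, so without loss of generality we may assume those generators have been selected so that the commuting ones form a basis of the intrinsic subspace $V_0 := \{v \in V : \langle v, v' \rangle = 0 \text{ for all } v' \in V_\Pi\}$. This is always achievable by first choosing a basis of $V_0$ (whose elements necessarily commute with every measurement generator) and then extending to a basis of $V$ with vectors that each fail to commute with at least one $\Sigma'_j$. Under this choice, $V_{commute} = V_0$ and $V_{other}$ is a complement to it, so $\dim V_{other} = \dim V - \dim V_0$.

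Next I would apply rank-nullity to the linear map $\phi : V \to \mathbb{Z}_d^k$ defined by $\phi(v) = (\langle v, \Sigma'_1\rangle, \ldots, \langle v, \Sigma'_k \rangle)$, where $\Sigma'_1, \ldots, \Sigma'_k$ are the measurement generators. Its kernel is precisely $V_0$, so $\mathrm{rank}\,\phi = \dim V_{other}$. Symmetrically, consider $\phi' : V_\Pi \to \mathbb{Z}_d^n$ sending $v' \mapsto (\langle \Sigma_1, v'\rangle, \ldots, \langle \Sigma_n, v'\rangle)$. Its kernel $V_\Pi^0 := \{v' \in V_\Pi : \langle v, v'\rangle = 0 \text{ for all } v \in V\}$ consists of measurement generators (and their combinations) commuting with the entire state, so by the same basis-adaptation argument applied to $V_\Pi$ we get $\dim V_\Pi^0 = k - m$ and hence $\mathrm{rank}\,\phi' = m$.

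Finally, the matrices representing $\phi$ and $\phi'$ with respect to the chosen bases are related by transposition and an overall sign, because both encode the single bilinear pairing $B(v, v') = v^T J v'$ and $J^T = -J$. Over the field $\mathbb{Z}_d$ (with $d$ prime), row rank equals column rank, so $\dim V_{other} = \mathrm{rank}\,\phi = \mathrm{rank}\,\phi' = m$, completing the argument.

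The main obstacle is really the preliminary basis-choice step: the paper's definition of $V_{other}$ as ``the span of the non-commuting generators of $V$'' is not basis-invariant in general (one can find bases in which every individual generator is non-commuting yet certain combinations commute), so one must either pair the lemma with an adapted choice of generators or reinterpret $V_{other}$ as the canonical quotient $V / V_0$. Beyond this subtlety the argument reduces to the symmetry of rank for a bilinear form, which depends crucially on $\mathbb{Z}_d$ being a field — exactly why the lemma sits in the prime-dimensional section and why the non-prime case requires the separate treatment developed later.
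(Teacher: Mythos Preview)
Your approach via rank--nullity and the row-rank/column-rank symmetry is genuinely different from the paper's. The paper argues directly: for $m=1$ it takes two hypothetical independent elements $u,v \in V_{other}$, writes $\bold{\Sigma}'^T J\bold{u} = a$ and $\bold{\Sigma}'^T J\bold{v} = b$ with $a,b\neq 0$, and uses the existence of $c$ with $a - bc = 0$ in the prime field to exhibit the commuting combination $u - cv$, contradicting $\dim V_{other} > 1$; it then extends to general $m$ by a short remark. Your version packages the same mechanism as linear algebra over $\mathbb{Z}_d$ and makes the field hypothesis (hence the placement in the prime section) completely explicit, whereas the paper's argument is more hands-on but less transparent about where primality enters.

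One point deserves tightening. Your claim $\dim V_\Pi^0 = k - m$ ``by the same basis-adaptation argument'' implicitly assumes the \emph{given} measurement generators are themselves adapted. The $k-m$ commuting $\Sigma'_j$ always lie in $V_\Pi^0$, so you get $\dim V_\Pi^0 \geq k-m$ and hence $\dim V_{other} = \mathrm{rank}\,\phi \leq m$ for free; but equality requires that no nontrivial combination of the $m$ non-commuting $\Sigma'_j$ falls back into $V_\Pi^0$, which is exactly the basis-dependence you diagnose on the state side. Your closing caveat should therefore be read as applying symmetrically to the measurement generators as well --- otherwise the rank argument only delivers the inequality. With both bases taken to be adapted the proof is clean, and arguably tidier than the paper's own lower-bound step, which rules out $\dim V_{other} = m-1$ by invoking linear independence of the $\Sigma'_j$ in a way that somewhat conflates the two subspaces.
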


\begin{proof}
Let us initially assume the measurement to consist only of one non-commuting generator $\Sigma',$ so $m=1.$ Let us prove the lemma by contradiction. Let $u,v$ be two orthogonal non-zero elements of $V_{other}.$ Note that, by definition of a subspace, if $u,v\in V_{other},$ also a linear combination of $u,v$ has to belong to $V_{other}.$ 
By definition $u,v$ do not commute with $\Sigma'.$ Therefore we can write \[\bold{\Sigma}'^T J \bold{u}=a,\] \[\bold{\Sigma}'^T J \bold{v}=b,\] where $a,b\neq0.$ In particular there will exist a constant $c\in \mathbb{Z}_d$ such that $a-bc=0.$ This implies that \[\bold{\Sigma}'^T J (\bold{u}-c\bold{v})=0.\] Hence the linear combination $(u-cv)$ belongs to $V_{commute}.$ This is a contradiction, therefore $V_{other}$ has dimension $1.$
From the same reasoning, in the case of $m$ non-commuting generators of the measurement, the subspace $V_{other}$ has dimensions at maximum equal to $m.$ Let us assume now that the dimension of $V_{other}$ is $m-1.$ This is not possible because it would mean that, for example, $\Sigma'_{m-1}$ can be written as a linear combination of $\Sigma'_0,\dots,\Sigma'_{m-2}.$ However this is not the case because, by definition of basis set, all the generators are linearly independent. Therefore $V_{other}$ has dimension $m.$
\end{proof}


We will now provide the updating rules both for $V$ and $\bold{w}$ in two steps: first considering the state and measurement to commute, and then the general (non-commuting) case. 


\newtheorem{Theorem}{Theorem}
\begin{Theorem}\label{CommutingTheorem}
\emph{Commuting case}. The epistemic state $(V,\bold{w})$ after a measurement $(V_{\Pi},\bold{r})$ that commutes with it, \emph{i.e.} their generators all Poisson commute, is described by the epistemic state $(V',\bold{w'})$ such that \begin{equation}\label{ultimate}V'^{\perp}=(V^{\perp}+\bold{w}-\bold{w'})\cap(V_{\Pi}^{\perp}+\bold{r}-\bold{w'}),\end{equation} where $\bold{w'}$ is given by equation \begin{equation}\label{represontic}\bold{w'} = \bold{w}+\sum_{i}\bold{\Sigma'}_i^T(\bold{r}-\bold{w})\bold{\gamma}_i,\end{equation} where $\bold{\Sigma'}_i$ are the generators of the measurement $\Pi$ and $\gamma_i$ is such that $\bold{\Sigma'}_i^T\bold{\gamma}_i=1.$
\end{Theorem}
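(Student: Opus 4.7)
The plan is to split the argument into two parts: first establish the structural identity for $V'^{\perp}$, then construct and verify the representative ontic state $\mathbf{w}'$.

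For the subspace part, I would start from the physical meaning of a post-measurement epistemic state: after the measurement the observer's known variables are the union of the generators of $V$ with those of $V_{\Pi}$. Since by hypothesis every generator of $V_{\Pi}$ symplectically commutes with every generator of $V$, the span $V' = V + V_{\Pi}$ is again isotropic, so it qualifies as a valid set of known variables. Applying the ``adding a generator'' rule from Section \ref{add} iteratively gives $V'^{\perp} = V^{\perp} \cap V_{\Pi}^{\perp}$. Now, by Proposition \ref{EpistemicTheorem}, the ontic states consistent with $(V,\mathbf{w})$ are $V^{\perp}+\mathbf{w}$ and those compatible with the measurement outcome $(V_{\Pi},\mathbf{r})$ are $V_{\Pi}^{\perp}+\mathbf{r}$; the post-measurement consistent ontic states are precisely their intersection. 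Writing this intersection as $V'^{\perp}+\mathbf{w}'$ for some choice of representative $\mathbf{w}'$ and translating both affine subspaces by $-\mathbf{w}'$ yields the claimed identity \eqref{ultimate}.

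For the ontic-vector part, the task is to exhibit any $\mathbf{w}'\in (V^{\perp}+\mathbf{w})\cap(V_{\Pi}^{\perp}+\mathbf{r})$, i.e.\ any vector satisfying $\bold{\Sigma}_j^T\mathbf{w}' = \bold{\Sigma}_j^T\mathbf{w}$ for every generator $\Sigma_j$ of $V$ and $\bold{\Sigma'}_i^T\mathbf{w}' = \bold{\Sigma'}_i^T\mathbf{r}$ for every generator $\Sigma'_i$ of $V_{\Pi}$. Since $\mathbf{w}$ already satisfies the first family, I would build $\mathbf{w}'$ by correcting $\mathbf{w}$ on the measurement generators without disturbing the state generators, using the shifting identity \eqref{identity}. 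Concretely, the increment $\bold{\Sigma'}_i^T(\mathbf{r}-\mathbf{w})\,\bold{\gamma}_i$ shifts the value of $\Sigma'_i$ by exactly $\bold{\Sigma'}_i^T(\mathbf{r}-\mathbf{w})$, bringing it to $\bold{\Sigma'}_i^T\mathbf{r}$, provided $\gamma_i$ is chosen as a dual vector with $\bold{\Sigma'}_k^T\bold{\gamma}_i = \delta_{ik}$ and $\bold{\Sigma}_j^T\bold{\gamma}_i = 0$. Summing over $i$ then yields the formula \eqref{represontic}, and a direct substitution verifies the two families of constraints.

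The main obstacle, and the reason primality of $d$ enters essentially, is the existence of these dual vectors $\gamma_i$: the theorem only records the defining condition $\bold{\Sigma'}_i^T\bold{\gamma}_i = 1$, but to preserve every other equation one actually needs $\gamma_i$ orthogonal (in the standard inner product) to all other generators of $V$ and $V_{\Pi}$. Building such a basis requires inverting nonzero elements of $\mathbb{Z}_d$, which is only guaranteed when $d$ is prime. I would therefore spell this out carefully in the proof, noting that the construction fails in the general non-prime case and thereby motivating the separate treatment in the next section. A minor additional point is to restrict attention to linearly independent generators when writing $V' = V\oplus V_{\Pi}$, so that the ``adding'' rule applies unambiguously.
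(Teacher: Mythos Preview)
Your plan matches the paper's: obtain $V' = V \oplus V_\Pi$ via the adding-generator rule of Section~\ref{add}, whence $V'^{\perp}=V^{\perp}\cap V_\Pi^{\perp}$, and then build $\mathbf{w}'$ by correcting $\mathbf{w}$ with the shifting identity~\eqref{identity}; the paper likewise treats a single measurement generator first and extends to several by ``simply summing over all those generators,'' viewing the measurement as a sequence of single-generator updates.

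Where you go beyond the paper is in the conditions you impose on the $\gamma_i$. The paper's proof uses the particular choice $\gamma = k^{-1}\bold{\Sigma'}$ for one generator and then asserts the sum formula without checking that successive corrections leave the earlier constraints (both the other measurement generators and the state generators $\Sigma_j$) undisturbed. You correctly observe that for~\eqref{represontic} to yield a $\mathbf{w}'$ actually lying in $(V^{\perp}+\mathbf{w})\cap(V_\Pi^{\perp}+\mathbf{r})$ one needs the stronger dual-basis conditions $\bold{\Sigma'}_k^{T}\bold{\gamma}_i=\delta_{ik}$ and $\bold{\Sigma}_j^{T}\bold{\gamma}_i=0$, not merely $\bold{\Sigma'}_i^{T}\bold{\gamma}_i=1$ as the theorem records. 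Your identification of this as the exact place where primality of $d$ (and hence invertibility in $\mathbb{Z}_d$) is required is sharper than the paper's treatment, which leaves the point implicit; in this respect your proposal is a tightening of the original argument rather than a departure from it.
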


\begin{proof}

When the state and measurement commute we have to add the generators of the measurement to the set of generators of $V,$ as we have seen in the previous subsection \ref{add} (learning stage). Therefore the updating rule for the subspace $V$ is (equation \eqref{adding}) \begin{equation}\label{subcomm} V\rightarrow V'=V\oplus span\{\Sigma'_0,\Sigma'_1,\dots \Sigma'_i,\dots\}= V\oplus V_{\Pi}.\end{equation} In terms of perpendicular subspaces this implies that $V'^{\perp}=V^{\perp}\cap V_{\Pi}^{\perp}.$

Let us initially assume the measurement to consist only of one generator $\Sigma'.$ 
Let us recall that the outcome associated with $\Sigma'$ is $\sigma'.$ We assume $\bold{w}$ is not compatible with this outcome, \emph{i.e.} $\bold{\Sigma'}^T\bold{w}=\sigma'+x,$ for some shift $x\in \mathbb{Z}_d,$ and we want to find $\bold{w}'$ such that \begin{equation}\bold{\Sigma'}^T\bold{w'}=\sigma'.\end{equation}
The identity \eqref{identity} we used in the previous section does the job. More precisely, \[\bold{w}'=\bold{w}-x\bold{\gamma},\] where the vector $\bold{\gamma}$ is such that $\Sigma'^{T}\gamma=1.$ The above expression can be also written as \[\bold{w}'= \bold{w}-k^{-1}x\bold{\Sigma}',\] where $k=\Sigma'^{T}\Sigma'.$ The inverse of $k$ always exists because we are in the prime dimensional case. Without referring to $x$ we can restate the updating rule for the representative ontic vector as \begin{equation}\label{represonticproof}\bold{w}\rightarrow \bold{w}+k^{-1}(\sigma'-\bold{\Sigma'}^T\bold{w})\bold{\Sigma'}= \bold{w}+k^{-1}\bold{\Sigma'}^T(\bold{r}-\bold{w})\bold{\Sigma'}.\end{equation} Note that if we consider more than one generator of the measurement, we simply have to sum over all those generators in the second term. This immediately follows from considering the whole measurement $\Pi$ as a sequence of measurements given by each generator $\Sigma'_i$ and apply every time the rule \eqref{represonticproof}.
We state again that the above formula always holds for prime dimensional systems. We cannot claim the same in non-prime dimensions.
The correct updating rule for the subspace $V'^{\perp}$ is found by combining the updating rules for $V$ and $\bold{w}$ as in \eqref{ultimate}.
This correction simply sets the subspaces to the same origin in order to correctly compute their intersection, as schematically shown in figure \ref{Venn}. At the end we obtain for the epistemic state $(V',\bold{w'})$ that $V'^{\perp}+\bold{w'}=(V^{\perp}+\bold{w})\cap(V_{\Pi}^{\perp}+\bold{r}).$
We recall that the probability associated to each ontic state consistent with the epistemic state is uniform, $i.e.$ given by \[P(V',\bold{w'}) =\frac{1}{|V'^{\perp}+\bold{w'}|}=\frac{1}{|V'^{\perp}|} = \frac{1}{|(V^{\perp}+\bold{w})\cap(V_{\Pi}^{\perp}+\bold{r})|},\] where $|\cdot|$ indicates the size of the subspace.

\end{proof}

Figure \ref{Commuting_example} shows a basic example of theorem \ref{CommutingTheorem}.

\begin{figure*}[h!]
\centering

{\includegraphics[width=.8\textwidth,height=.35\textheight]{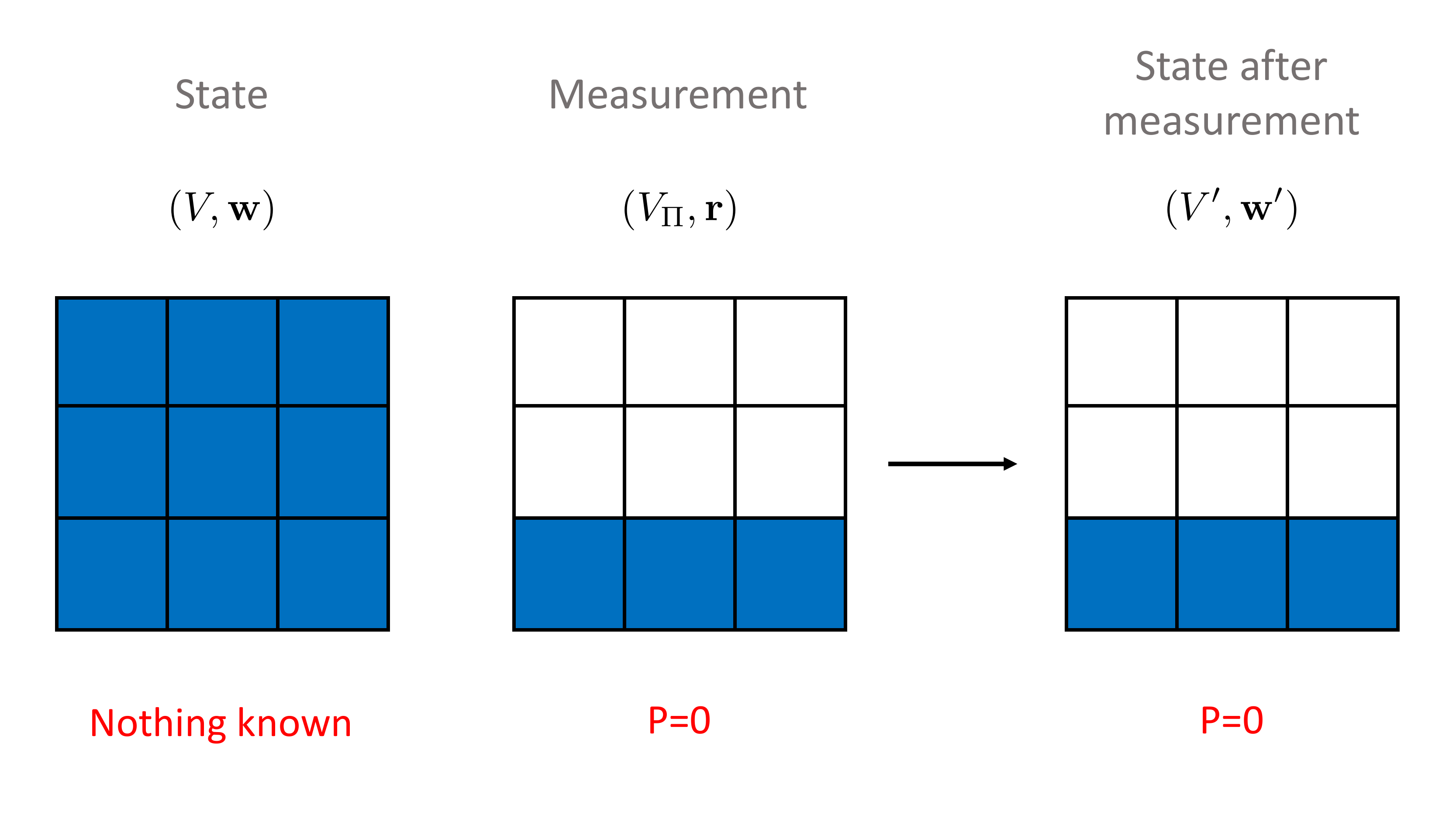}}

\caption{\footnotesize{\textbf{Updating rules in the prime commuting case.} The figure above shows a simple one-trit example of theorem \ref{CommutingTheorem} regarding the updating rule to predict the state after a sharp measurement that commutes with the original state. The state after measurement is given by $V'^{\perp}+\bold{w'}=(V^{\perp}+\bold{w})\cap(V_{\Pi}^{\perp}+\bold{r}).$ In the above case the shift vectors are all $(0,0),$ the perpendicular subspaces are $V^{\perp}=\Omega,$ $V^{\perp}_{\Pi}=span\{(1,0)\},$ and $V'^{\perp}=V^{\perp}_{\Pi}.$ Note that with "measurement" we are here representing one element of the measurement. The other elements can be obtained by simply shifting $\bold{r}$ as seen in figure \ref{meas}. The final state is associated to each element of the measurement, each one with a corresponding probability of happening. The same reasoning holds for figures \ref{NonCommuting_example} and \ref{NonPrime_example}.}}

\label{Commuting_example}
\end{figure*}

\begin{figure*}[h!]
\centering

{\includegraphics[width=.99\textwidth,height=.4\textheight]{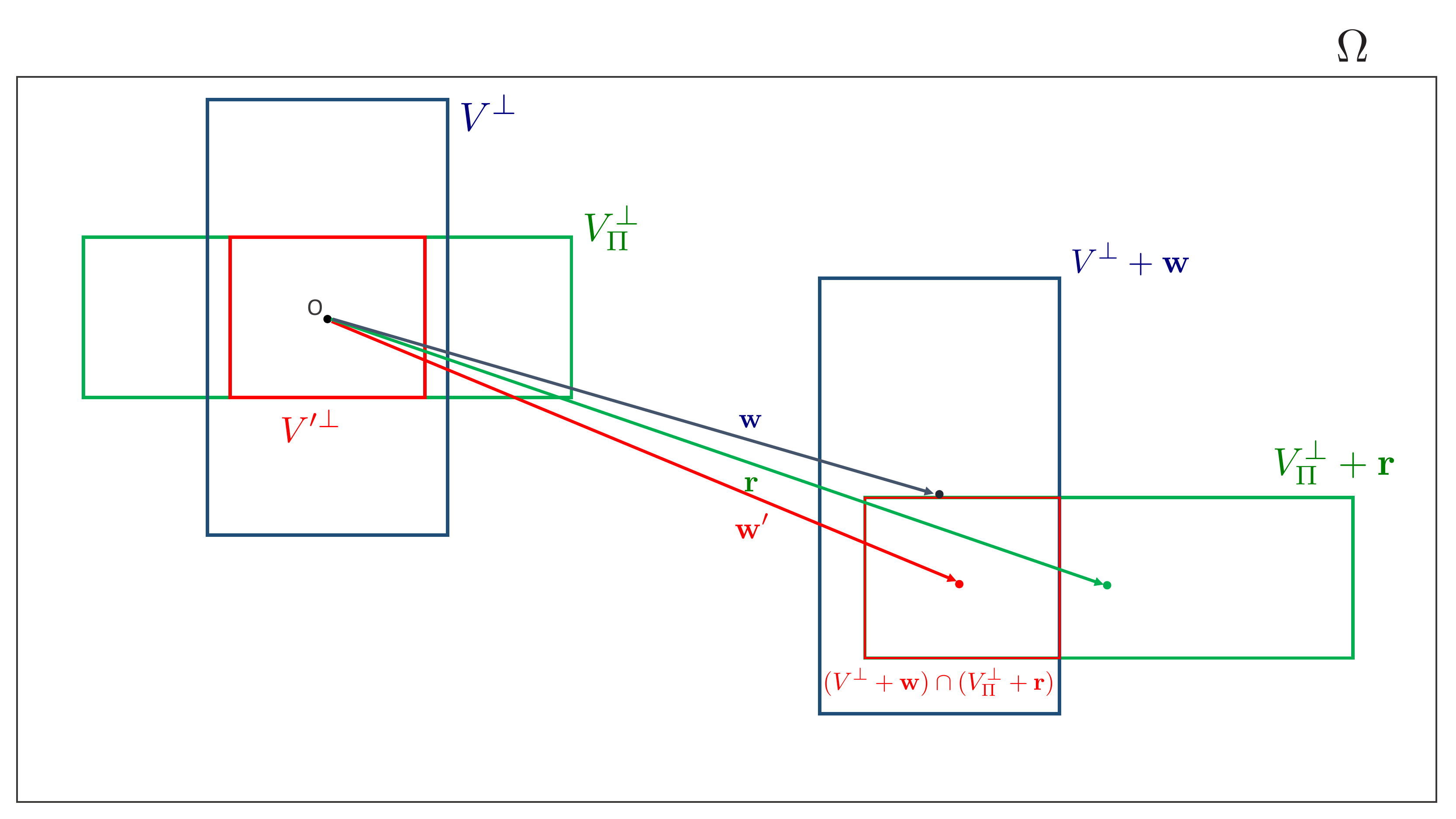}}

\caption{\footnotesize{\textbf{Updating rules via Venn diagrams.} The figure above schematically shows the subspaces $V^{\perp},V_{\Pi}^{\perp},V'^{\perp}$ and the shifted ones (after applying the corresponding representative ontic vectors $\bold{w},\bold{r},\bold{w'}$). In particular this picture explains the expression $V'^{\perp}=(V^{\perp}+\bold{w}-\bold{w'})\cap(V_{\Pi}^{\perp}+\bold{r}-\bold{w'})$ as a result of combining the updating rules for the epistemic subspaces and the representative ontic vectors. It is important to notice that to obtain the correct intersection we have to shift the subspaces $V^{\perp}+\bold{w}$ and $V_{\Pi}+\bold{r}$ back to the same origin (this is the role of $\bold{w'}$). Indeed note that $V^{\perp}\cap V_{\Pi}^{\perp}$ is different from $(V^{\perp}+\bold{w})\cap (V_{\Pi}^{\perp}+\bold{r}).$}}

\label{Venn}
\end{figure*}

\newtheorem{NonCommutingTheorem}[Theorem]{Theorem}
\begin{NonCommutingTheorem}\label{NonCommutingTheorem}
\emph{Non-commuting case}. The epistemic state $(V,\bold{w})$ after a measurement $(V_{\Pi},\bold{r})$ that does not commute with it, \emph{i.e.} some of the generators do not Poisson commute with the state, is described by the epistemic state $(V',\bold{w'})$ such that \begin{equation}\label{SubNonCom}V'^{\perp}=(V_{commute}^{\perp}+\bold{w}-\bold{w'})\cap(V_{\Pi}^{\perp}+\bold{r}-\bold{w'}),\end{equation} where $V_{commute}^{\perp}$ is given by \begin{equation}\label{ultimateBis}V_{commute}^{\perp}=V^{\perp}\oplus V_{other}.\end{equation} The representative ontic vector $\bold{w'}$ is given by \begin{equation}\label{represonticbis}\bold{w'} = \bold{w}+\sum_{i}\bold{\Sigma'}_i^T(\bold{r}-\bold{w})\bold{\gamma}_i,\end{equation} where $\bold{\Sigma'}_i$ are the generators (even the non-commuting ones) of the measurement $\Pi$ and $\gamma_i$ is such that $\bold{\Sigma'}_i^T\bold{\gamma}_i=1.$
\end{NonCommutingTheorem}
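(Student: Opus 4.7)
My plan is to reduce the non-commuting case to the commuting Theorem \ref{CommutingTheorem} via a preliminary ``forgetting'' step. The physical motivation is the classical complementarity principle: observables of $V$ that fail to Poisson-commute with the measurement generators cannot remain part of the observer's knowledge after measuring, so we first strip those generators out of $V$, and only then apply the learning procedure of the commuting case. Correspondingly, the two equations \eqref{SubNonCom} and \eqref{ultimateBis} of the theorem precisely encode these two stages, and the shift formula \eqref{represonticbis} is inherited from Theorem \ref{CommutingTheorem}.

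For the forgetting step, I would use the decomposition $V = V_{commute}\oplus V_{other}$ and Lemma \ref{LemmaNonCommuting}, which tells us that $V_{other}$ has dimension exactly $m$ (the number of non-commuting measurement generators). I would then iterate the removal procedure from subsection \ref{add}(ii) once for each generator of $V_{other}$: at each step the perpendicular grows by one dimension, in the direction $k^{-1}\bold{\Sigma}$ for the generator $\bold{\Sigma}$ being removed (and here primality of $d$ is crucial, since $k=\bold{\Sigma}^T\bold{\Sigma}$ must be invertible in $\mathbb{Z}_d$). After $m$ such steps, the accumulated enlargement is exactly $V_{other}$, giving $V_{commute}^{\perp}=V^{\perp}\oplus V_{other}$, i.e.\ equation \eqref{ultimateBis}.

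Once $V_{other}$ has been discarded, every remaining generator in $V_{commute}$ Poisson-commutes with every generator of $V_{\Pi}$, so the situation is precisely the hypothesis of Theorem \ref{CommutingTheorem} applied to the pair $(V_{commute},\bold{w})$ and $(V_{\Pi},\bold{r})$. Invoking that theorem directly produces $V'^{\perp}=(V_{commute}^{\perp}+\bold{w}-\bold{w'})\cap(V_{\Pi}^{\perp}+\bold{r}-\bold{w'})$, which is equation \eqref{SubNonCom}, and also produces the shift update \eqref{represonticbis}, where the sum ranges over all generators $\bold{\Sigma'}_i$ of $V_\Pi$. Note that the non-commuting generators of $V_\Pi$ must be included in the sum too, because after the forgetting step they are perfectly compatible with $V_{commute}$ and must be learned.

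The main obstacle is the consistency of the shift vector $\bold{w'}$: it must simultaneously satisfy $\bold{\Sigma'}_i^T\bold{w'}=\sigma'_i$ for every measurement generator and $\bold{\Sigma}_j^T\bold{w'}=\sigma_j$ for every generator $\bold{\Sigma}_j\in V_{commute}$ that remains known. The first condition is exactly what the formula \eqref{represonticbis} is engineered to achieve via the identity \eqref{identity}. The second requires that the auxiliary vectors $\bold{\gamma}_i$ be chosen so that the successive shifts $\bold{\Sigma'}_i^T(\bold{r}-\bold{w})\bold{\gamma}_i$ do not spoil the outcome of any $\bold{\Sigma}_j\in V_{commute}$; equivalently, one needs $\bold{\gamma}_i$ compatible (in the Euclidean sense) with $V_{commute}$ while still obeying $\bold{\Sigma'}_i^T\bold{\gamma}_i=1$. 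In prime dimensions this freedom is always available, because the invertibility of non-zero scalars in $\mathbb{Z}_d$ guarantees that such a $\bold{\gamma}_i$ can be constructed; verifying this explicitly, and checking by direct substitution that the resulting $\bold{w'}$ lies in $(V^{\perp}+\bold{w})$ shifted appropriately, is the main technical step that closes the argument.
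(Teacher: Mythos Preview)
Your proposal is correct and follows essentially the same approach as the paper: decompose $V=V_{commute}\oplus V_{other}$, remove $V_{other}$ via the procedure of subsection~\ref{add}(ii) to obtain $V_{commute}^{\perp}=V^{\perp}\oplus V_{other}$, and then invoke Theorem~\ref{CommutingTheorem} on the pair $(V_{commute},\bold{w})$ and $(V_{\Pi},\bold{r})$. The paper's own proof is terser and simply asserts that the shift-vector update ``is the same as in the previous case''; your discussion of the consistency of $\bold{w'}$ with the retained generators of $V_{commute}$, and the corresponding freedom in choosing the $\bold{\gamma}_i$, is a genuine point that the paper leaves implicit.
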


\begin{proof}

Let us assume that $\Sigma'_j,$ for $j\in\{0,\dots,m-1\},$ do not commute with the generators of $V.$ In addition to the learning stage of the previous commuting case, we also have a removal stage of the disturbing part of the measurement. We have already seen that we can split the subspace $V$ in $V=V_{commute}\oplus V_{other},$ where $V_{other}$ is generated, from lemma \ref{LemmaNonCommuting}, by all the $\Sigma'_j,$ for $j\in\{0,\dots,m-1\}.$  Therefore we can reduce to the commuting case if we only consider $V_{commute}$ instead of the whole $V.$ The updating rule for the subspace $V$ then becomes \[V\rightarrow V' =V_{commute}\oplus span\{\Sigma'_0,\Sigma'_1,\dots \Sigma'_i,\dots\} =V_{commute}\oplus V_{\Pi}.\]
In terms of the perpendicular subspaces note that we can both write \[V'^{\perp}=(V^{\perp}\oplus V_{other})\cap V_{\Pi}^{\perp},\] and \[V'^{\perp}=V_{commute}^{\perp}\cap V_{\Pi}^{\perp},\] from the usual property that the perpendicular of a direct sum is the intersection of the perpendicular subspaces. The updating rule for the representative ontic vector is the same as in the previous case (equation \eqref{represontic}). The correct updating rule for the subspace $V'^{\perp}$ is found by combining the updating rules for $V$ and $\bold{w}$ as in the previous case \eqref{ultimate}, where $V^{\perp}$ is replaced by $V^{\perp}_{commute}.$ At the end we obtain for the epistemic state $(V',\bold{w'})$ that $V'^{\perp}+\bold{w'}=(V_{commute}^{\perp}+\bold{w})\cap(V_{\Pi}^{\perp}+\bold{r}).$

\end{proof}

Figure \ref{NonCommuting_example} shows a basic example of theorem \ref{CommutingTheorem}.

\begin{figure*}[h!]
\centering

{\includegraphics[width=.8\textwidth,height=.35\textheight]{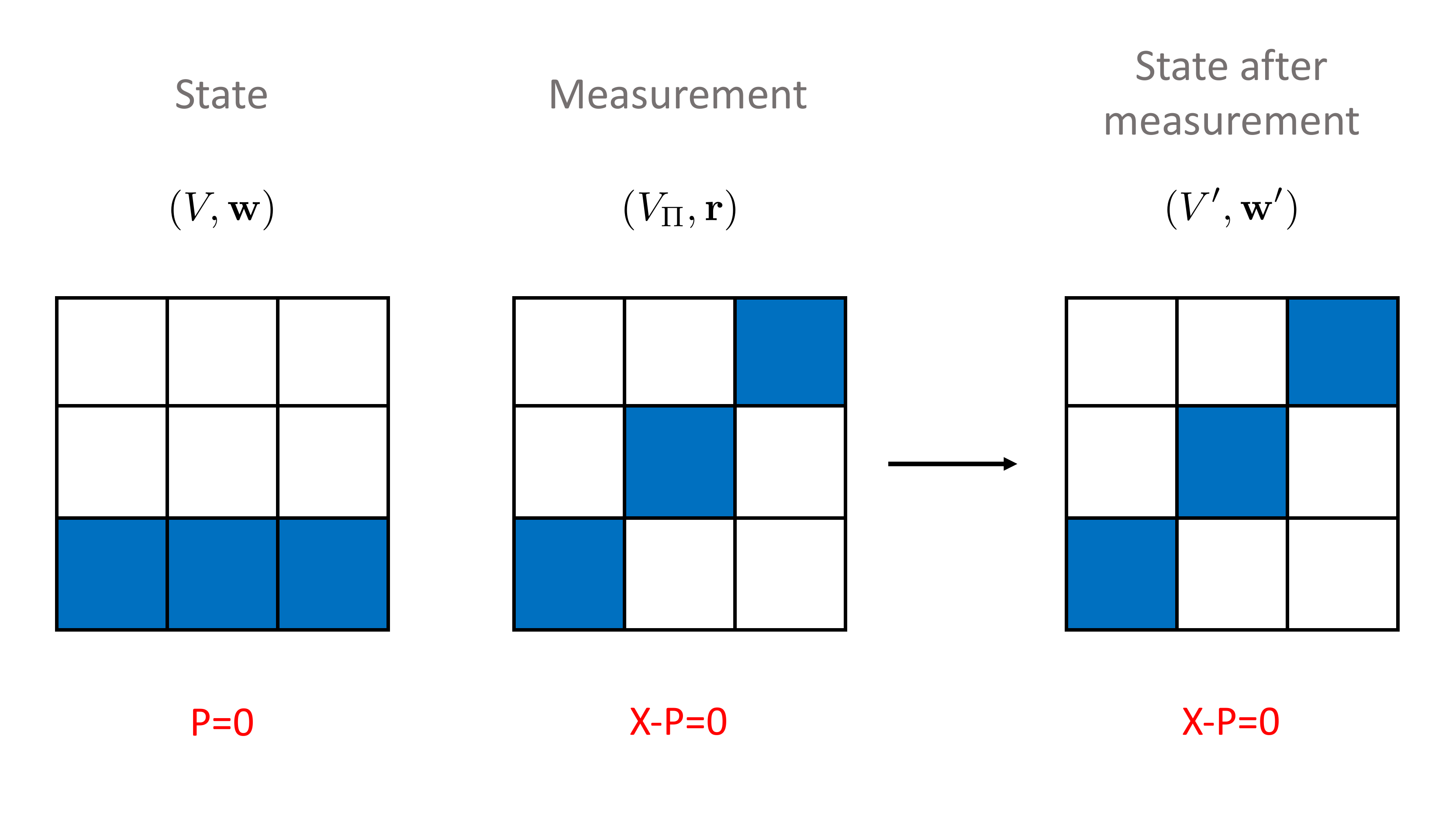}}

\caption{\footnotesize{\textbf{Updating rules in the prime non-commuting case.} The figure above shows a simple one-trit example of theorem \ref{NonCommutingTheorem} regarding the updating rule to predict the state after a sharp measurement that does not commute with the original state. The state after measurement is given by $V'^{\perp}+\bold{w'}=(V_{commute}^{\perp}+\bold{w})\cap(V_{\Pi}^{\perp}+\bold{r}).$ In the above case the shift vectors are all $(0,0),$ the perpendicular subspaces are $V_{commute}^{\perp}=\Omega,$ $V^{\perp}_{\Pi}=span\{(1,1)\},$ and $V'^{\perp}=V^{\perp}_{\Pi}.$}} 

\label{NonCommuting_example}
\end{figure*}



\section{Updating rules - \emph{non prime} dimensional case}

It is quite common in studies of discrete theories, like Spekkens' model and SQM, to only consider the prime dimensional case because of the particular features of the set of integers modulo $d$, $\mathbb{Z}_d,$ when $d$ is non-prime, like the impossibility of uniquely define inverses of numbers. For example in our present case, figure \ref{figure_example_nonprime} shows the peculiar properties of the observable $3X$ in $d=6,$ which has not full spectrum of outcomes. The general formulation of Spekkens' model of section \ref{SecSpek} does not change; not even the rules for calculating the probabilities of outcome and the updating of the state after a reversible evolutions (which are present in \cite{Spek2}). The new formulation we provide affects the observables and the related measurements updating rules. More precisely our issue, as already noticed, regards the updating-rule formula \eqref{represontic} and \eqref{ultimateBis} for the shift vector $\bold{w'}$ and the subspace $V^{\perp}_{commute},$ which do not always hold when the dimension $d$ is \emph{non-prime}. In fact the vector $\bold{\gamma}_i$ such that $\bold{\Sigma'}_i^T\bold{\gamma}_i=1$ does not always exist in that case. On the other hand, in prime dimensions, it always uniquely exists because $\bold{\gamma}_i=k_i^{-1}\bold{\Sigma'}_i$ and the inverse of the integer $k_i=\bold{\Sigma'}_i^{T}\bold{\Sigma'}_i$ always uniquely exists. 
Unlike the original formulation due to Spekkens, we will now characterise Spekkens' model in non-prime dimensions. In particular we characterise which are the observables that are problematic in the above sense - the \emph{coarse-graining} observables, like $3X$ in $d=6$ - and we then find the updating rules for a state subjected to the measurement of such observables by rewriting them in terms of non-problematic observables - the \emph{fine-graining} observables.

In the next subsection we assume single-system observables (\emph{i.e.} of the kind $\Sigma'=aX+bP,$ $a,b\in\mathbb{Z}_d$) in order to soften the notation and facilitate the comprehension. This will bring more easily to the updating rules even in the most general case of many systems (subsection \ref{bumbum}).
In this case we recall, without making any reference to the quantity $k^{-1},$ but just in terms of the vector $\gamma,$ the updating rule for the shift vector $\bold{w'},$ 
\begin{equation}\label{ShiftGamma}\bold{w}'=\bold{w}-x\bold{\gamma},\end{equation} where, as usual, $x=-\bold{\Sigma'}^T(\bold{r}-\bold{w}),$ and the expression for $V^{\perp}_{commute},$
\begin{equation}\label{CommuteGamma}V^{\perp}_{commute}=\bigcup_c(V^{\perp}+c\gamma).\end{equation}

\subsection{Coarse-graining and fine-graining observables}
We define a fine-graining observable as an observable that has \emph{full spectrum}, \emph{i.e.} it can assume all the values in $\mathbb{Z}_d.$ On the contrary a coarse-graining observable has not full spectrum.
\newtheorem{FineTheorem}[LemmaNonCommuting]{Lemma}
\begin{FineTheorem}\label{FineTheorem}
An observable $O_{fg}$ has full spectrum, \emph{i.e.} it is a fine-graining observable, if and only if it has the following form, \begin{equation}\label{fine}O_{fg}=a'X+b'P,\end{equation} where $a',b'\in\mathbb{Z}_d$ are such that they \emph{do not share} any integer factor or power factor of $d.$ 
\end{FineTheorem}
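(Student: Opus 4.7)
The plan is to reduce the lemma to an elementary statement about the subgroup of $\mathbb{Z}_d$ generated by $a'$ and $b'$. First I would unpack what ``full spectrum'' means through the evaluation rule \eqref{linear}: on an ontic state $(x,p)\in \mathbb{Z}_d^{2}$ the observable $O_{fg}$ takes the value $a'x+b'p \bmod d$. Hence $O_{fg}$ is full-spectrum if and only if
\begin{equation}
\{\,a'x+b'p \bmod d\;:\;x,p\in\mathbb{Z}_d\,\}=\mathbb{Z}_d.
\end{equation}
The left-hand set is exactly the additive subgroup of $\mathbb{Z}_d$ generated by the residues of $a'$ and $b'$, and by the classification of subgroups of a cyclic group this subgroup coincides with $\langle \gcd(a',b',d)\rangle$. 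Thus the lemma becomes the clean statement: $O_{fg}$ is full-spectrum iff $\gcd(a',b',d)=1$.

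For the ``only if'' direction I would argue by contrapositive. Suppose $a'$ and $b'$ share some common factor $m>1$ that also divides $d$ (so that $m$ is a nontrivial factor of $d$ in the sense of the lemma). Then every value $a'x+b'p$ is a multiple of $m$ in $\mathbb{Z}$, so the attained outcomes all lie in $m\mathbb{Z}_d$, which is a proper subgroup of $\mathbb{Z}_d$ since $m\mid d$ and $m>1$. In particular the outcome $1$ is unreachable and $O_{fg}$ is not full-spectrum; this is the phenomenon illustrated by $3X$ in $d=6$.

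For the ``if'' direction I would use Bezout. Assuming $\gcd(a',b',d)=1$, there exist integers $s,t,u$ with $sa'+tb'+ud=1$, so
\begin{equation}
a'\cdot s+b'\cdot t\equiv 1 \pmod d.
\end{equation}
Hence the ontic state $(s,t)$ yields outcome $1$, and by linearity in \eqref{linear} the ontic state $(cs,ct)$ yields outcome $c$ for every $c\in\mathbb{Z}_d$, proving full spectrum.

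The only delicate point, which I would flag explicitly before the equivalence, is matching the slightly informal phrase ``integer factor or power factor of $d$'' to the precise arithmetic condition $\gcd(a',b',d)=1$: a common factor of $a'$ and $b'$ that also divides $d$ is witnessed, at the level of primes, by some prime $p$ with $p\mid a'$, $p\mid b'$, $p\mid d$, and a prime-power witness $p^k$ gives no extra strength because it already forces $p\mid \gcd(a',b',d)$. This reconciles the verbal statement with the gcd formulation and leaves no real obstacle; the proof is then just the two short arguments above.
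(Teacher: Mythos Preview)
Your proof is correct and follows essentially the same approach as the paper: Bezout's identity for the ``if'' direction and the contrapositive (all outcomes are multiples of the shared factor $D$) for ``only if.'' Your use of the three-variable $\gcd(a',b',d)$ and the corresponding Bezout relation $sa'+tb'+ud=1$ is a slight polish that handles the edge case---$a',b'$ sharing a common factor coprime to $d$---more cleanly than the paper, which relegates this case to a footnote.
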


On the contrary a coarse-graining observable is written as \begin{equation}\label{coarse}O_{cg}=aX+bP=D(a'X+b'P),\end{equation} where $a',b' \in\mathbb{Z}_d$ are again such that they \emph{do not share} any integer factor or power factor of $d$ and $D$ is a factor shared by $a,b\in\mathbb{Z}_d.$ More precisely the factor $D$ is called \emph{degeneracy}  and it is defined as \begin{equation}\label{degeneracy}D=D_1^{n_1}\cdot D_2^{n_2} \cdot \dots ,\end{equation} where $D_1,D_2,\dots$ are different integer factors of $d$ shared by $a$ and $b,$ and $n_1,n_2,\dots$ are the maximum powers of these factor such that they can still be grouped out from $a$ and $b$. We take the maximum powers because we want the remaining part, $a'X+b'P,$ to not share any common integer factor or power factor of $d$ between $a'$ and $b'.$ In this way we can associate a fine-graining observable to a coarse graining one by simply dropping the degeneracy $D$ from the latter.

\begin{proof}

Let us first prove that an observable of the kind \eqref{fine}, $O_{fg}=a'X+b'P,$ is a full spectrum one. 
This can be proven by using Bezout's identity \cite{Bezout}: let $a'$ and $b'$ be nonzero integers and let $D$ be their greatest common divisor. Then there exist integers $X$ and $P$ such that $aX+bP=D$. In our case the greatest common divisor $D$ is equal to one, since $a',b'$ are coprime. \footnote{ It could be that $a',b'$ share a factor which is not a factor of $d.$ In this case the argument follows identically as if they were coprime.} Therefore we have proven that there exist values of the canonical variables $X,P\in\mathbb{Z}_d$ such that $O_{fg}=a'X+b'P=1.$ In order to reach all the other values of the spectrum we simply need to multiply both $X$ and $P$ in the previuos equation by $j\in\mathbb{Z}_d.$

We now prove the converse, \emph{i.e.} that a full spectrum observable implies it to be written as \eqref{fine}. We prove this by seeing that an observable written as \eqref{coarse} has not full spectrum, \emph{i.e.} we negate both terms of the reverse original implication. Proving the latter is straightforward, since the multiplication modulo $d$ between an arbitrary quantity and a factor $D,$ which is given by powers of integer factors of $d,$ gives as a result a multiple of $D.$ Since the multiples of $D$ do not cover the whole $\mathbb{Z}_d,$ then any observable of the form \eqref{coarse} has not full spectrum. \footnote{Multiples of $D$ do not cover the whole spectrum of $\mathbb{Z}_d$ because D has not an inverse $D^{-1}$  (it is not coprime with $d$) and so we cannot obtain the whole values $\sigma$ of $\mathbb{Z}_d$ by simply finding $X,P$ such that $a'X+b'P=D^{-1}\sigma.$} 
 Since an observable of the form \eqref{fine} is an observable that cannot be written as \eqref{coarse} by definition, we obtain that a full spectrum observable implies the observable to be written as \eqref{fine}.

\end{proof}

Given lemma \ref{FineTheorem} we have got the expressions \eqref{coarse} and \eqref{fine} for coarse-graining and fine-graining observables. We want now to prove the following lemma to ensure that fine-graining observables are characterised by precisely defined updating rules.

\newtheorem{GammaTheorem}[LemmaNonCommuting]{Lemma}
\begin{GammaTheorem}\label{GammaTheorem}
The vector $\bold{\gamma}$ in the updating rule \eqref{ShiftGamma} for the shift vector $\bold{w'}$ and in the equation \eqref{CommuteGamma} for the subspace $V^{\perp}_{commute}$ exists if and only if the observable is a fine-graining one.
\end{GammaTheorem}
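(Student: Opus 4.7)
The plan is to translate the existence of $\bold{\gamma}$ into a classical number-theoretic solvability question and then invoke a generalised Bezout identity modulo $d$. For a single-system observable $\bold{\Sigma'}=(a',b')$, the condition $\bold{\Sigma'}^T\bold{\gamma}=1$ with $\bold{\gamma}=(x,p)\in\mathbb{Z}_d^2$ is precisely the linear congruence $a'x+b'p\equiv 1\pmod{d}$, which is solvable in $\mathbb{Z}_d$ if and only if $\gcd(a',b',d)=1$. Both directions of the stated equivalence will follow from this single observation, since the formulas \eqref{ShiftGamma} for $\bold{w}'$ and \eqref{CommuteGamma} for $V^{\perp}_{commute}$ use the same $\bold{\gamma}$.

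For the direction ``fine-graining $\Rightarrow$ $\bold{\gamma}$ exists'', I would start from Lemma \ref{FineTheorem}: when $\bold{\Sigma'}$ is fine-graining, its components $a',b'$ share no integer or power factor of $d$, which I would interpret as $\gcd(a',b',d)=1$. Bezout's identity then yields integers $x_0,p_0$ with $a'x_0+b'p_0=1$; reducing these modulo $d$ supplies the required $\bold{\gamma}=(x_0\bmod d,\,p_0\bmod d)$. Plugging this $\bold{\gamma}$ back into \eqref{ShiftGamma} and \eqref{CommuteGamma} shows that both updating formulas are well-defined in this case.

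For the converse I would proceed by contrapositive: assuming $\bold{\Sigma'}$ is coarse-graining, write it via \eqref{coarse} as $\bold{\Sigma'}=D\,(a'',b'')$ with degeneracy $D>1$ dividing $d$. For any candidate $\bold{\gamma}=(x,p)$, the inner product $\bold{\Sigma'}^T\bold{\gamma}=D(a''x+b''p)$ is always a multiple of $D$ modulo $d$, and because $D>1$ divides $d$, the value $1$ is never such a multiple, so no $\bold{\gamma}$ can exist. The main delicate point — and really the only nontrivial step — is faithfully translating Spekkens' prose ``$a',b'$ do not share any integer factor or power factor of $d$'' into the clean arithmetic statement $\gcd(a',b',d)=1$, in particular covering the footnote case where $a',b'$ may share a common factor coprime to $d$; once this translation is justified, both directions reduce to standard Bezout arithmetic with no computation. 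The extension to multi-system observables $\bold{\Sigma'}=(a'_0,b'_0,\dots,a'_{n-1},b'_{n-1})$ is then immediate: one replaces the above gcd by $\gcd(a'_0,b'_0,\dots,a'_{n-1},b'_{n-1},d)$ and reuses the same Bezout argument component by component.
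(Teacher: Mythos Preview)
Your argument is correct and rests on the same idea as the paper's proof---solvability of $a'x+b'p\equiv 1\pmod d$ via Bezout---so the forward direction is essentially identical. The converse, however, is handled differently: the paper argues directly that if $\bold{\gamma}$ exists then setting $(X,P)=c\bold{\gamma}$ for each $c\in\mathbb{Z}_d$ hits every outcome, hence the observable has full spectrum and is fine-graining by Lemma~\ref{FineTheorem}; this avoids your contrapositive through the coarse-graining form entirely and sidesteps the gcd translation you flag as delicate. One small slip in your contrapositive: the degeneracy $D$ need not divide $d$ (e.g.\ $d=12$, $a=b=8$ gives $D=8$), so ``$D>1$ dividing $d$'' should be weakened to $\gcd(D,d)>1$, which still holds since $D$ is built from prime factors of $d$ and is enough to exclude $1$ from the image of multiplication by $D$.
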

  
\begin{proof}
Let us prove that if we have a fine graining observable the vector $\bold{\gamma}$ exists. In our case $\Sigma'=(a',b')$ and, by definition of $a',b'$ (as usual defined for fine-graining observables) and full spectrum, we can always find a vector $\bold{\gamma}=(\gamma_a,\gamma_b)$ such that  $\bold{\Sigma'}^{T}\bold{\gamma}=a'\gamma_a + b'\gamma_b$ equals $1.$

Let us prove the converse. We now have the vector $\bold{\gamma}$ such that $\bold{\Sigma'}^{T}\bold{\gamma}=a\gamma_a + b\gamma_b=1,$ where the coefficients $a,b\in\mathbb{Z}_d$ define our observable $aX+bP=\sigma.$ We want to prove that $\sigma$ can achieve all the values of $\mathbb{Z}_d.$ Since  $\bold{\Sigma'}^{T}\bold{\gamma}=1$ we can set the values of $(X,P)$ as equal to $(\gamma_a,\gamma_b)$ in order to reach the value $\sigma=1.$ We can now achieve all the other values of the spectrum by simply redefining $\gamma$ as $\tilde{\bold{\gamma}}=c \bold{\gamma},$ where $c$ assumes all the values in $\mathbb{Z}_d.$ 

\end{proof}  

The above lemma \ref{GammaTheorem} should convince us that in order to find the updating rules in the presence of a coarse graining observable, it is appropriate to decompose it in terms of fine-graining observables. 
Let us assume that our coarse-graining observable is $O_{cg}=aX+bP=D(a'X+b'P)=\sigma$, and the associated isotropic subspace and representative ontic vector are $(V_{cg},\bold{r}_{cg}).$ To this observable we can associate $\bar{D}$ different fine-graining observables $O_{fg}=a'X+b'P=\sigma_j,$ where $j\in{0,\dots,\bar{D}-1}.$ The quantity $\bar{D}$ is the degeneracy $D$ without the powers $n_1,n_2,\dots,$ \emph{i.e.} $\bar{D}=D_1\cdot D_2 \cdot \dots.$ Indeed the powers $n_1,n_2,\dots$ simply represent multiplicities associated to each corresponding fine-graining observable. The associated isotropic subspaces and representative ontic vectors are $(V_{fg},\bold{r}^{(j)}_{fg}),$ where $V_{fg}=span\{(a',b')\}$ (see figure \ref{figure_example_nonprime}).

\begin{figure*}[h!]
\centering

{\includegraphics[width=.9\textwidth,height=.38\textheight]{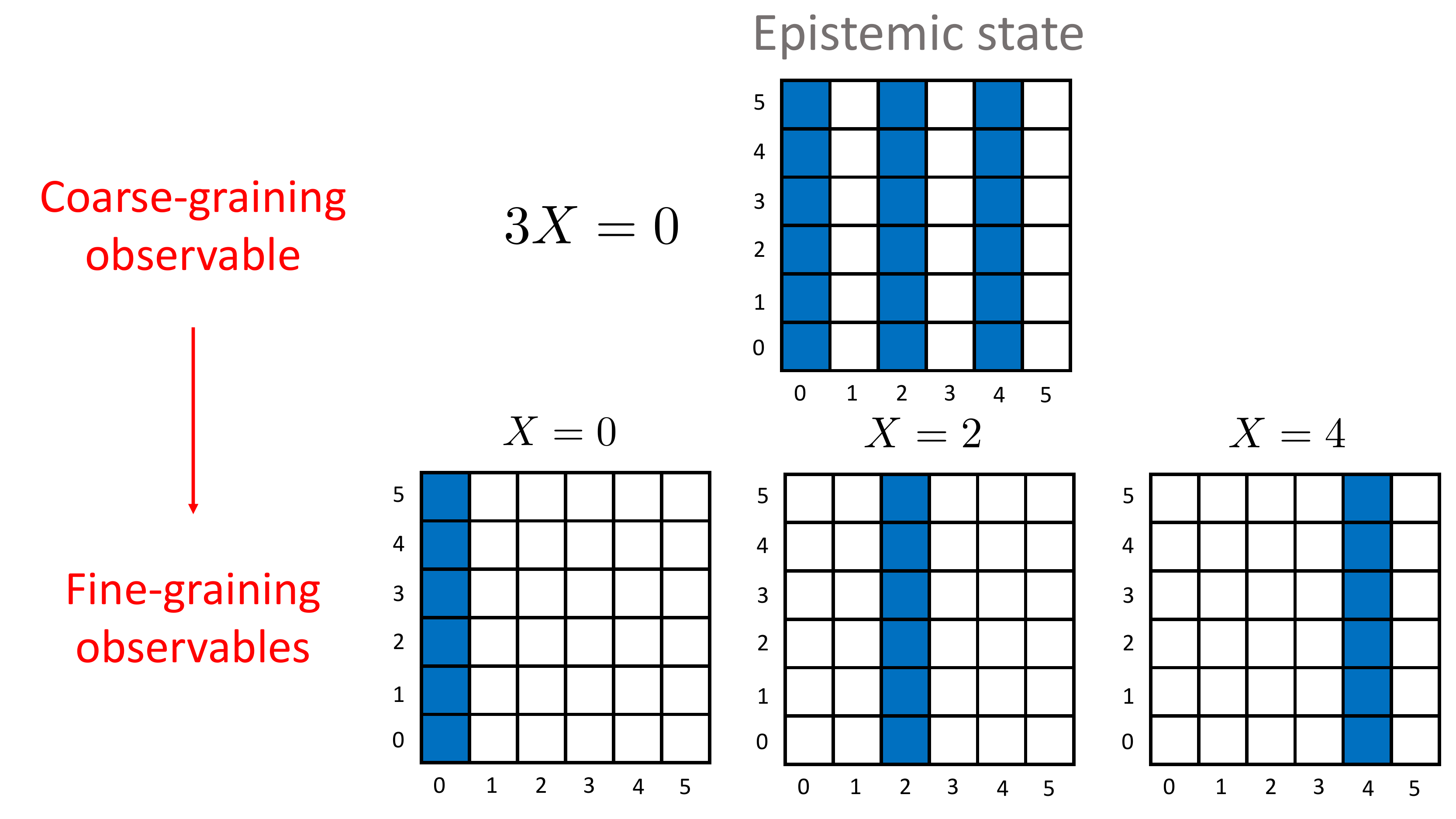}}

\caption{\footnotesize{\textbf{Simple example of a coarse-graining observable and its decomposition in fine-graining observables in $d=6$.} The coarse-graining observable $O_{cg}=3X=0$ in $d=6$ shows degeneracy $D=3.$ The three fine-graining observables associated with $O_{cg}$ are $O^{(0)}_{fg}=X=0,$ $O^{(1)}_{fg}=X=2$ and $O^{(2)}_{fg}=X=4.$ 
The perpendicular subspaces of known variables are $V^{\perp}_{cg} = span\{(0,1),(2,0)\},$ $V^{\perp}_{fg} = span\{(0,1)\}$ and $V_D=span\{(2,0)\}.$ A choice for the representative ontic vectors is $\bold{r}_{cg}=(0,0),$ $\bold{r}^{(0)}_{fg}=(0,0),$ $\bold{r}^{(1)}_{fg}=(2,0)$ and $\bold{r}^{(2)}_{fg}=(4,0).$
Notice that not all the values are possible for the coarse-graining observable $3X$ to be a valid observable. Only $3X=0$ and $3X=3$ are valid (indeed what would it be the epistemic state representation for \emph{e.g} $3X=2?$), as witnessed by the expression \eqref{errej} for the associated fine-graining observables, that is valid only when the ratio $\frac{\sigma_{cg}}{D}$ exists.
}}

\label{figure_example_nonprime}
\end{figure*}

By definition the perpendicular isotropic subspaces are \begin{equation}\label{SubCoars} V^{\perp}_{cg} =\{\bold{v}=(v_a,v_b)\in\Omega | v_a a + v_b b  =D(v_a a' + v_b b')=0 \; \text{mod}(d)\}\end{equation} \begin{equation}\label{SubFine}V^{\perp}_{fg}=\{\bold{v'} =(v'_a,v'_b)\in\Omega | v'_a a' + v'_b b'=0 \; \text{mod}(d)\}.\end{equation} 
It is clear that $V^{\perp}_{cg}\supset V^{\perp}_{fg}$ and we can therefore construct $V^{\perp}_{cg}$  as \begin{equation}\label{FineSpek}V^{\perp}_{cg} = \bigcup^{\bar{D}-1}_{j=0}(V^{\perp}_{fg}+\bold{v}_j) =V^{\perp}_{fg}\oplus V_D,\end{equation} where the subspace $V_D$ provides all the vectors that we need to combine with the vectors of  $V^{\perp}_{fg}$ to reach the whole $V^{\perp}_{cg}.$ We call the subspace $V_D$ the \emph{degeneracy subspace} because it encodes the degeneracy of $V_{cg}$ with respect to $V_{fg}$ . It has dimension $1$ and size $\bar{D}.$ 
 This is consistent with the fact that the dimensions of $V^{\perp}_{cg}$ and $V^{\perp}_{fg}$ are respectively $2$ and $1.$ The sizes are respectively $\bar{D}\cdot d$ and $d.$ 
The size of $V^{\perp}_{fg}$ is $d$ because it is always a maximally isotropic subspace and its dimension is $1$ because from one generator we get all the other vectors of the subspace by multiplication with $j\in\mathbb{Z}_d$. The dimension $V^{\perp}_{cg}$ is $2$ because it cannot be $1$ (it would be the same subspace as $V^{\perp}_{fg}$) and it cannot be greater than $2$ since also the whole phase space $\Omega=\mathbb{Z}^{2}_d$ has dimension $2.$ In order to know the size of $V^{\perp}_{cg}$ we need to count all the $j\bold{v},$ where $j\in \{0,1,\dots,\bar{D}-1\},$ that means $\bar{D}\cdot d.$
Therefore it can be written as $V_D=span\{\bold{v}\},$ and all its $\bar{D}$ vectors are of the kind $\bold{v}_j=j \bold{v}.$ The above reasoning easily extends to the case of $n$ systems, where the dimensions are $\text{dim}(V^{\perp}_{cg})=2n, \; \text{dim}(V^{\perp}_{fg})=n, \; \text{dim}(V_{D})=n,$ and the sizes are $|V^{\perp}_{cg}|=\bar{D}^{n}d^{n}, \; |V^{\perp}_{fg}|= d^{n}, \; |V_{D}|=\bar{D}^{n}.$ 
We can now prove that $V_D$ is a vector space. 
\begin{proof}
The definition of $V_D$ is \begin{equation}\label{VD} V_{D} =\{\bold{v}\in\Omega | \alpha \bold{w} + \beta \bold{v}=\bold{t}, where 
\; \bold{w}\in V^{\perp}_{fg}, \; \alpha,\beta\in\mathbb{Z}_d,\; \bold{t}\in V^{\perp}_{cg}\}.\end{equation}
To see that it is a vector space we just need to see that $(0,0)$ belongs to $V_D$ and that $V_D$ is closed under addition and multiplication, \emph{i.e.} under linear combinations. The null vector belongs to $V_D$ because in the definition \eqref{VD} we would remain with $\alpha\bold{w}=\bold{t},$ where $\bold{w}\in V^{\perp}_{fg}$ and $V^{\perp}_{fg} \subset  V^{\perp}_{cg}.$  Let us imagine that we have two vectors  $\bold{v},\bold{z}\in V_{D}.$ Is the vector $\gamma \bold{v}+ \delta \bold{z},$ where $\gamma,\delta\in \mathbb{Z}_d,$ still belonging to $V_D?$ It is easy to see that if we apply the definition \eqref{VD} we would get \[\alpha\bold{w}+\beta(\gamma \bold{v}+ \delta \bold{z}),\] which can be rewritten as 
\[( \alpha \bold{w}+\beta\gamma\bold{v})+ (0\cdot \bold{w}+ \beta\delta\bold{z}),\] where each of the two terms in parenthesis belong to $V^{\perp}_{cg},$ and therefore the whole expression belongs to it too.
\end{proof}

We now define the shift vectors $\bold{r}^{(j)}_{fg}$ in terms of $\bold{r}_{cg}$ and see that we can encode the degeneracy expressed by $V_D$ in there. The idea is schematically depicted in figure \ref{VennNonPrime}.

\begin{figure*}[h!]
\centering

{\includegraphics[width=.99\textwidth,height=.45\textheight]{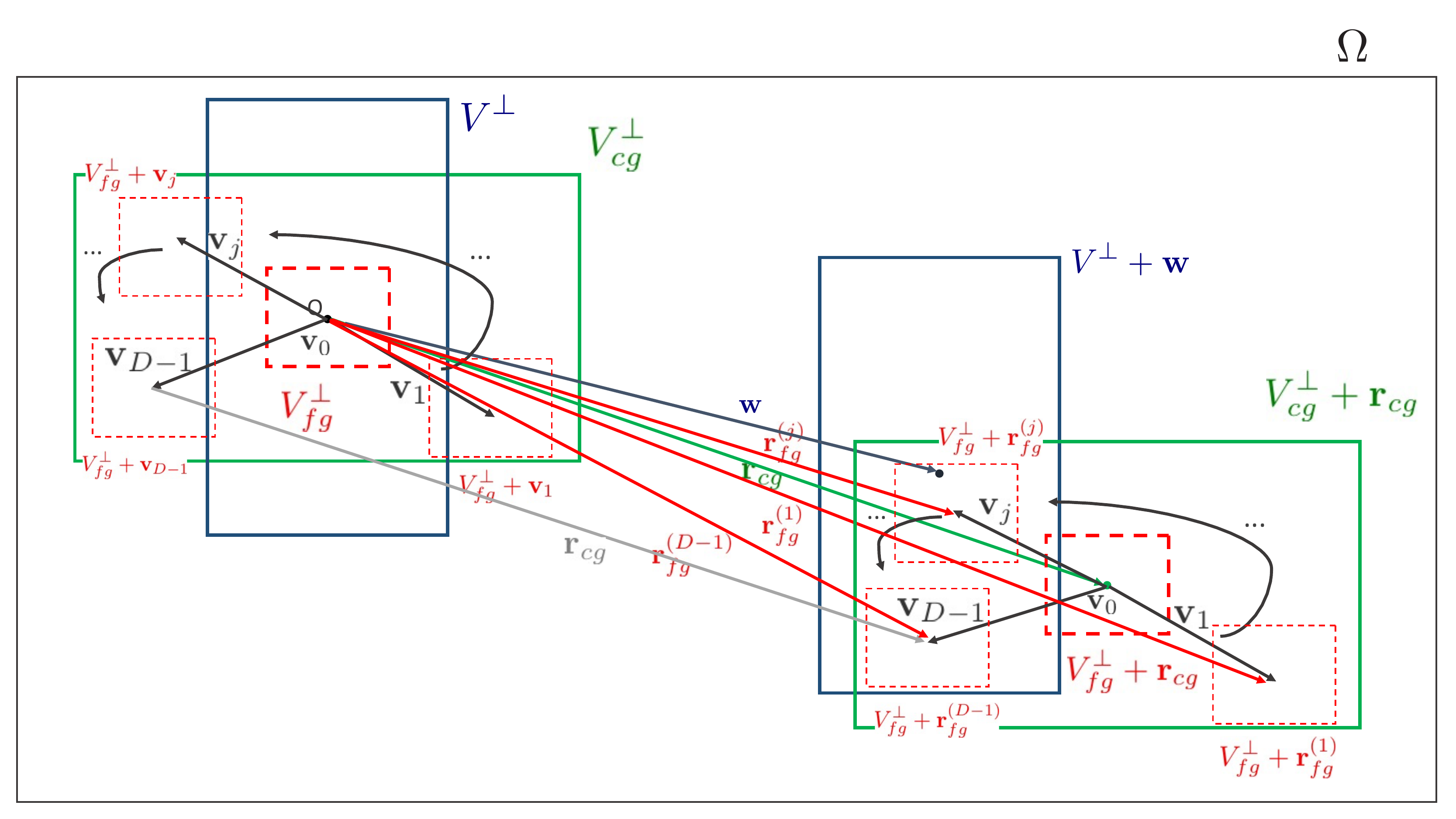}}

\caption{\footnotesize{\textbf{Schematic representation of Coarse-graining decompositions into fine-graining observables.} The figure above schematically represents the relation between the subspaces $V^{\perp},V^{\perp}_{cg}, V^{\perp}_{fg}$ and their corresponding shift vectors $\bold{w},\bold{r}_{cg},\bold{r}^{(j)}_{fg}.$ The green rectangles represent the subspaces $V^{\perp}_{cg}$ and its translated $V^{\perp}_{cg}+\bold{r}_{cg}.$ The latter can be seen to be equivalent either to the dashed red rectangle representing $V^{\perp}_{fg}$ shifted by the degeneracy vectors of $V_{D}$ (light black arrows), this corresponding to $V^{\perp}_{cg},$ and then shifted by $\bold{r}_{cg}$ (green arrow or light grey arrow), or to the dashed rectangle representing $V^{\perp}_{fg}$ shifted by each $\bold{r}^{(j)}_{fg}$ (red arrows). Both are in accordance with the expressions of $V^{\perp}_{cg}+\bold{r}_{cg},$ $V^{\perp}_{cg}+\bold{r}_{cg}=V^{\perp}_{fg}\oplus V_{D}+\bold{r}_{cg}=V^{\perp}_{fg}+\sum_{j=0}^{\bar{D}-1}(\bold{r}_{cg}+j\bold{v}),$ where $\bold{v}$ is the generator of $V_{D}.$ Note that, as a consequence of the degeneracy characterising the coarse-graining observable, we could keep adding $\sum_{j=0}^{\bar{D}-1} j\bold{v}=D\bold{v}$ without changing the validity of the expression of $V^{\perp}_{cg}+\bold{r}_{cg}.$ We can see it just by noticing that $V_D +\sum_{j=0}^{\bar{D}-1} j\bold{v}=V_{D}$ or, more simply, that $D\bold{v}=D\cdot C\bold{\Sigma'_{fg}}=0,$ since $D\cdot C =0 \text{mod}(d).$ 
}}

\label{VennNonPrime}
\end{figure*}

Given the shift vector associated to the coarse-graining observable $\bold{r}_{cg},$ the shift vectors $\bold{r}^{(j)}_{fg}$ associated to the corresponding fine-graining observables are of the kind \begin{equation}\label{ShiftFine}\bold{r}^{(j)}_{fg}=\bold{r}_{cg}+\bold{v}_j, \end{equation} where $\bold{v}_j\in V_d$ and are therefore of the kind $j\bold{v},$ where $j\in\{0,\dots,\bar{D}-1\}.$ This implies that if we assume the outcome associated to the coarse-graining observable to be $\sigma_{cg},$ \emph{i.e.} $\bold{\Sigma}^{T}_{cg}\bold{r}_{cg}=\sigma_{cg},$ where $\Sigma_{cg}=(a,b),$ then the outcomes associated to the fine graining-observables are \begin{equation}\label{errej}\bold{\Sigma}^{T}_{fg}\bold{r}^{(j)}_{fg}=\bold{\Sigma}^{T}_{fg}(\bold{r}_{cg}+j\bold{v})=\frac{\sigma_{cg}}{D}+jC,\end{equation} where $C$ is the \emph{anti-degeneracy} and it is defined as a non-zero number belonging to $\mathbb{Z}_d$ such that $D\cdot C=0  \; \text{mod}(d).$ The idea is that the vector $\bold{v}\in V_D$ is such that $\bold{\Sigma}^{T}_{fg}\bold{v}=C\neq 0,$ so it does not belong to $V^{\perp}_{fg},$ but it does belong to $V^{\perp}_{cg},$ since $D\cdot C=0 \; \text{mod}(d).$ An easy way to find one of the possible $\bold{v}$ is to calculate it as $C\bold{\Sigma}_{fg},$ where $\Sigma_{fg}$ is the generator of $V_{fg}.$  In this way we know that $D\bold{v}=0,$ but $\bold{v}$ does not belong to $V^{\perp}_{fg},$ \emph{i.e.} $v_a a' + v_b b'\neq 0$ because $\Sigma_{fg}$ is not in $V_{fg}^{\perp}.$
It is important to notice that equation \eqref{errej} implies that not all the outcomes are allowed for the fine-graining observables associated to the coarse-graining one; they are allowed only when the ratio $\frac{\sigma_{cg}}{D}$ exists. Figure \ref{figure_example_nonprime} also explains this fact.

\subsection{Measurement updating rules} \label{bumbum}

Let us assume to have $n$ systems and to measure the coarse-graining observable $O_{cg}=a_1X_1+b_1P_1+\dots +a_nX_n+b_nP_n=D(a'_1X_1+b'_1P_1+\dots +a'_nX_n+b'_nP_n)=\sigma_{cg},$ with corresponding isotropic subspace of known variables $V_{cg}$ and shift vector $\bold{r}_{cg},$ on the state $\rho=\alpha_1X_1+\beta_1P_1+\dots +\alpha_nX_n+\beta_nP_n=\sigma,$ with corresponding isotropic subspace of known variables $V=span\{\Sigma_1,\dots,\Sigma_n\}$ and shift vector $\bold{w}$.
The idea in order to find the updating rules for the state after measurement, the subspace of known variable $V'$ and the representative ontic vector $\bold{w'}$ is to compute the updating rule of the initial state $\rho$ with the fine-graining observables that are associated to the coarse graining observable $O_{cg},$ \emph{i.e.} $O^{(j)}_{fg}=a'_1X_1+b'_1P_1+\dots +a'_nX_n+b'_nP_n=\sigma^{(j)}_{fg}$ (indeed we know that the updating rules are valid for them from theorem \ref{GammaTheorem}), and then combine them together. More precisely, the following theorem holds.

\newtheorem{FinalbisTheorem}[Theorem]{Theorem}
\begin{FinalbisTheorem}\label{FinalbisTheorem}
The epistemic state $(V,\bold{w})$ after a coarse-graining measurement $(V_{cg},\bold{r}_{cg})$ is described by the epistemic state $(V',\bold{w'})$ such that \begin{equation}\label{utlimateBoom} V'^{\perp}=\bigcup^{\bar{D}-1}_{j=0} [(V_{commute}^{\perp}+\bold{w}-\bold{w'})\cap(V^{\perp}_{fg}+\bold{r}^{(j)}_{fg}-\bold{w'})],\end{equation} where the shift vector $\bold{w'}$ is the shift vector deriving from the updating rule of the state after the measurement of the fine-graining observable $O^{(j)}_{fg},$ \begin{equation}\bold{w'}=\bold{w'}_j=\bold{w}+\sum_{i=0}^{n}\bold{\Sigma'}^T_{i}(\bold{r}^{(j)}_{fg}-\bold{w})\gamma_i,\end{equation} where the vectors $\gamma_i$ are defined such that $\bold{\Sigma'}^T_i\gamma_i=1,$ and $\Sigma'_i$ are the $n$ generators of the subspace $V_{fg}$ associated to the fine-graining  observable $O^{(j)}_{fg}.$ The subspace $V_{commute}^{\perp}$ is given by the original $V$ after having removed the non-commuting part, \emph{i.e.} equation \eqref{ultimateBis}, \begin{equation}V_{commute}^{\perp}=\bigcup^{d}_{c=1}(V^{\perp}+\sum_{l=N+1}^{n}c\gamma_l)=V^{\perp}\bigoplus_{l=N+1}^{n} V_l = V^{\perp}\oplus V_{other},\end{equation} where $\gamma_l$ is such that $\bold{\Sigma}_l^{T}\gamma_l=1$ and $V_l$ are the subspaces spanned by the $(n-N)$ non-commuting generators $\Sigma_l$. Obviously if the state and measurement commute, then $V_{commute}^{\perp}=V^{\perp}.$ 
\end{FinalbisTheorem}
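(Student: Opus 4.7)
The plan is to reduce the coarse-graining measurement problem to one that has already been solved, namely the non-commuting updating rule of Theorem \ref{NonCommutingTheorem} applied to fine-graining observables. Lemma \ref{GammaTheorem} is the key enabler here: it guarantees that the vector $\bold{\gamma}_i$ needed in equations \eqref{ShiftGamma} and \eqref{CommuteGamma} exists precisely when the observable is fine-graining. Since Lemma \ref{FineTheorem} together with equation \eqref{FineSpek} decomposes any coarse-graining $O_{cg}$ into $\bar{D}$ fine-graining observables $O^{(j)}_{fg}$ with shift vectors $\bold{r}^{(j)}_{fg} = \bold{r}_{cg} + j\bold{v}$, I can simply run Theorem \ref{NonCommutingTheorem} on each $O^{(j)}_{fg}$ and then glue the results back together.

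Concretely, I would first argue that learning the coarse-graining outcome $\sigma_{cg}$ is operationally equivalent to learning that one of the $\bar{D}$ fine-graining outcomes $\sigma^{(j)}_{fg}=\sigma_{cg}/D+jC$ of equation \eqref{errej} was obtained, but without knowing which value of $j$. This is exactly the content of the coset decomposition $V^{\perp}_{cg}+\bold{r}_{cg}=\bigcup_{j=0}^{\bar{D}-1}(V^{\perp}_{fg}+\bold{r}^{(j)}_{fg})$ illustrated in Figure \ref{VennNonPrime}. Next, for each fixed $j$, Theorem \ref{NonCommutingTheorem} yields a post-measurement set of ontic states
\begin{equation*}
(V_{commute}^{\perp}+\bold{w})\cap(V^{\perp}_{fg}+\bold{r}^{(j)}_{fg}),
\end{equation*}
where $V_{commute}$ is built from the generators of $V$ that Poisson-commute with the fine-graining generators $\Sigma'_i$ (and, since $O_{cg}=D\,O^{(j)}_{fg}$, one verifies that the commutation relations of the $\Sigma_l$ with $V_{cg}$ coincide with those with $V_{fg}$ at the level of the $V_{other}$ subspace built in Lemma \ref{LemmaNonCommuting}). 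Taking the union over $j$ gives the set of ontic states consistent with the coarse-graining outcome, which is exactly the right-hand side of equation \eqref{utlimateBoom} once both intersecting cosets are translated back to a common origin $\bold{w'}$, as in the Venn-diagram argument of Figure \ref{Venn}.

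The remaining task is to justify that a single representative ontic vector $\bold{w'}$ suffices, even though each fine-graining branch would naturally supply its own $\bold{w'}_j$ via equation \eqref{represonticbis}. Here I would note that the different $\bold{w'}_j$ differ only by the degeneracy shifts $j\bold{v}\in V_D$, which lie inside $V^{\perp}_{cg}$ and hence are absorbed into the union structure; consequently any one of the $\bold{w'}_j$ can be fixed as the canonical representative, and the union of cosets in equation \eqref{utlimateBoom} is independent of that choice. The commuting-case subtheorem is recovered as the special case in which no non-commuting generators are present, so that $V_{commute}^{\perp}=V^{\perp}$, matching the statement's parenthetical remark.

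I expect the main obstacle to be the bookkeeping around the shift vectors and the $V_{commute}$ subspace. In the prime case, the natural correspondence between $\gamma_i$ and $k_i^{-1}\Sigma'_i$ trivialised this step, but for non-prime $d$ one must argue carefully that (i) the commutation classification of the generators of $V$ is insensitive to whether one tests against $O_{cg}$ or $O^{(j)}_{fg}$, and (ii) that the union over $j$ does not double-count ontic states or omit consistent ones. Both issues can be handled by combining the coset identity $V^{\perp}_{cg}=V^{\perp}_{fg}\oplus V_D$ with the anti-degeneracy relation $D\cdot C=0\;\text{mod}(d)$, but writing this out cleanly — so that the final expression matches equation \eqref{utlimateBoom} with the explicit formula for $\bold{w'}_j$ — is where the real work lies.
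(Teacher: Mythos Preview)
Your proposal is correct and follows essentially the same route as the paper's own proof: decompose $V^{\perp}_{cg}+\bold{r}_{cg}$ into the $\bar{D}$ fine-graining cosets via \eqref{FineSpek} and \eqref{ShiftFine}, apply the non-commuting rule of Theorem~\ref{NonCommutingTheorem} branch by branch, distribute the intersection over the union, and then argue that all the $\bold{w'}_j$ are interchangeable because they differ by an element of $V_D\subset V^{\perp}_{cg}$. The paper's argument is slightly more terse (it starts from the formula \eqref{SubNonCom} with $V^{\perp}_{cg}$ substituted in and immediately invokes $A\cap(\bigcup_j B_j)=\bigcup_j(A\cap B_j)$), whereas you add an operational gloss and flag the $V_{commute}$ bookkeeping more explicitly, but the mathematical content is the same.
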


The above theorem tells us that the way we combine the updating subspaces of the state with each individual fine-graining observables is through their union. This result is clear in terms of schematic diagrams (figure \ref{VennNonPrime}). The updated shift vector is just one of the updated shift vectors of the state with the fine-graining observables, because the information needed to update the shift vector of the state is encoded in just one of the fine-graining shift vectors. The degeneracy includes a meaningless multiplicity in the coarse-graining shift vector, and therefore every fine-graining observable can do the job of correctly updating the shift vector of the state. Actually every combination of the shift vectors $\bold{w'}_{j}$ can do the job, apart from the ones that sum to $0 \; \text{mod}(d),$ like $\sum_{j=0}^{\bar{D}-1}\bold{w'}_j.$ 
Note also that in the definition of $V^{\perp}_{commute}$ the vector $\gamma_l$ is, in general, degenerate. This is not a problem because any degenerate value of $\gamma_l$ brings to the same subspace $V^{\perp}_{commute},$ since by definition its role is to add the vectors $\lambda'=c\gamma_l$ to $V^{\perp}$ such that $\Sigma_n^T\lambda'\neq0.$ 


\begin{proof}
We find the expression for the updated subspace $V'^{\perp}$ by simply reusing the already found formulas \eqref{ultimate} and \eqref{ultimateBis} of the prime-dimensional case and substituting $V^{\perp}_{\Pi}$ with $V^{\perp}_{cg}$ and $\bold{r}$ with $\bold{r}_{cg},$ \[V'^{\perp}=(V_{commute}^{\perp}+\bold{w}-\bold{w'})\cap(V^{\perp}_{cg}+\bold{r}_{cg}-\bold{w'}).\] If we now consider the decomposition of $V^{\perp}_{cg}+\bold{r}_{cg}$ as in  \eqref{FineSpek} and \eqref{ShiftFine}, we obtain \[V'^{\perp}=(V_{commute}^{\perp}+\bold{w}-\bold{w'})\cap[\cup^{\bar{D}-1}_{j=0} (V^{\perp}_{fg}+\bold{r}^{(j)}_{fg})-\bold{w'}].\] Since the intersection of a union is the union of the intersections, we have proven the first part of the theorem, \[V'^{\perp}=\bigcup^{\bar{D}-1}_{j=0} [(V_{commute}^{\perp}+\bold{w}-\bold{w'})\cap(V^{\perp}_{fg}+\bold{r}^{(j)}_{fg}-\bold{w'})].\] 
The second part of the proof regards $\bold{w'}$ being equal to any of the $\bold{w'}_j.$  Because of the degeneracy, any $\bold{w'}_j$ is equivalent to the others (with different value of $j$) in order to provide us with $\bold{w'}$, indeed it is possible to find one from another just by adding a vector $\bold{v}\in V_D.$ The latter can be proven as follows. For simplicity let us assume to be in the case $n=1$ and that $\bold{v}$ is the generator of $V_D.$ We know that, by the definition of state after measurement of a fine-graining observable, the updated shift vector $\bold{w'}_j$ is such that $\bold{\Sigma}^{T}_{fg}\bold{w'}_j=\frac{\sigma_{cg}}{D}+jC=\sigma^{(j)}_{fg},$ where $C=\bold{\Sigma}^{T}_{fg}\bold{v}$ is the antidegeneracy (equation \eqref{errej}). It is straightforward to see that if we add $\bold{v}$ to $\bold{w'}_j,$ we get $\bold{w'}_j+\bold{v}=\bold{w'}_{j+1},$ indeed $\bold{\Sigma}^{T}_{fg}(\bold{w'}_j+\bold{v})=\frac{\sigma_{cg}}{D}+(j+1)C=\sigma^{(j+1)}_{fg}.$ 
\end{proof}

Figure \ref{NonPrime_example} shows a basic example of theorem \ref{FinalbisTheorem}.

\begin{figure*}[h!]
\centering

{\includegraphics[width=.8\textwidth,height=.35\textheight]{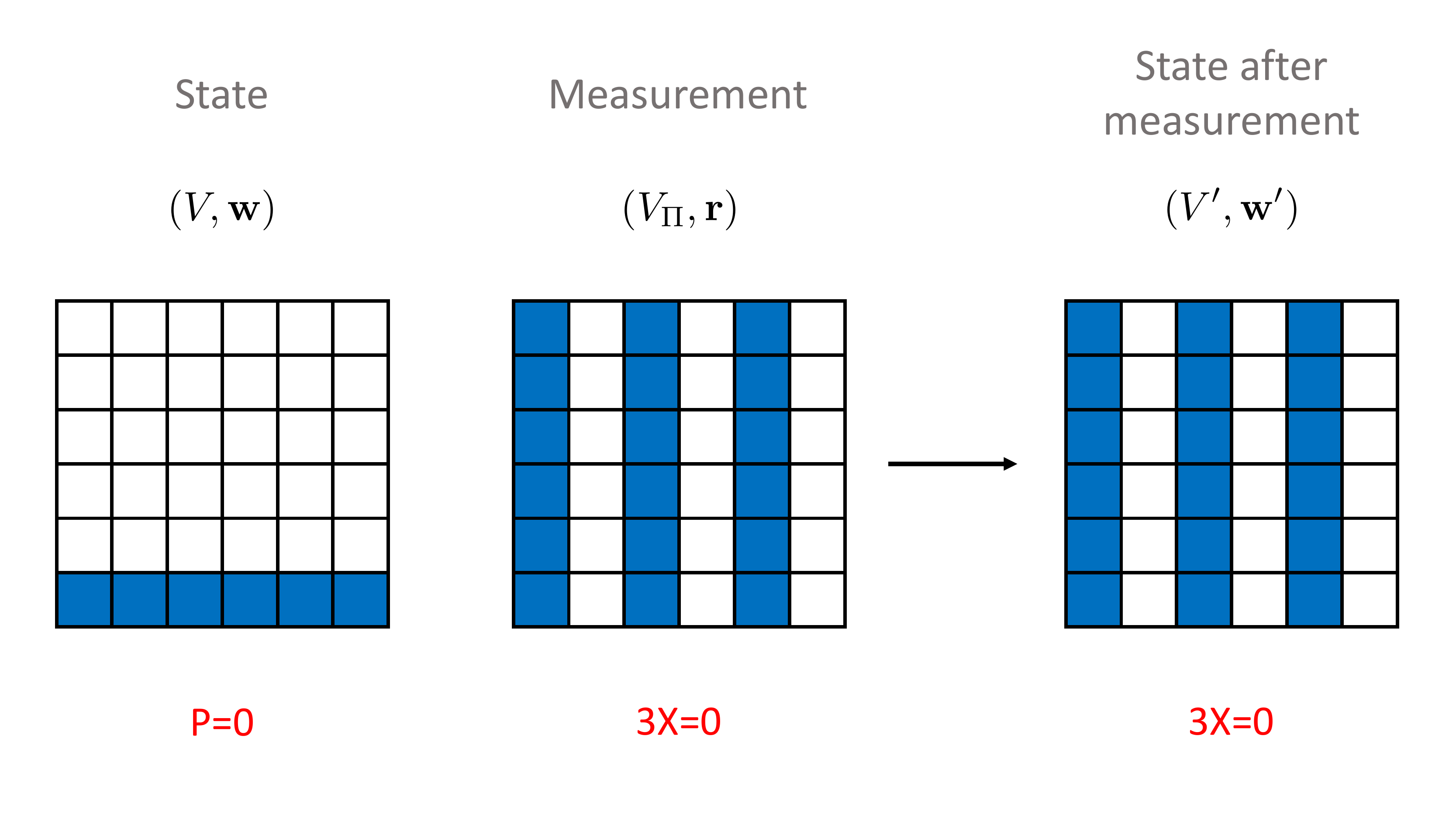}}

\caption{\footnotesize{\textbf{Updating rules in the non-prime non-commuting case.} The figure above shows a simple example (one system in $d=6$) of theorem \ref{FinalbisTheorem} regarding the updating rule to predict the state after a sharp measurement that does not commute with the original state. The state after measurement is given by $V'^{\perp}+\bold{w'}=\bigcup^{\bar{D}-1}_{j=0} [(V_{commute}^{\perp}+\bold{w})\cap(V^{\perp}_{fg}+\bold{r}^{(j)}_{fg})].$ In the above case the shift vectors are $\bold{w}=(0,0),\bold{r}^{(0)}_{fg}=(0,0),\bold{r}^{(1)}_{fg}=(2,0),\bold{r}^{(2)}_{fg}=(4,0),\bold{w'}=(0,0),$ the perpendicular subspaces are $V_{commute}^{\perp}=\Omega,$ $V^{\perp}_{fg}=span\{(0,1)\},$ $V^{\perp}_{\Pi}=span\{(0,1),(2,0)\},$ and $V'^{\perp}=V^{\perp}_{\Pi}.$}} 

\label{NonPrime_example}
\end{figure*}






\section{Equivalence of Spekkens' theory and SQM in all odd dimensions}

In \cite{Spek2} it has been shown that SQM and Spekkens' toy model are two operationally equivalent theories in odd prime dimensions via Gross' theory of discrete non-negative Wigner functions. We have generalised Spekkens' model to all discrete dimensions. The above equivalence does not hold in even dimensions, but we will now see that it holds in \emph{all} odd dimensions. We will also state the equivalence in terms of the updating rules, where all its elegance arises.  
We recall that SQM and Gross' theory of non-negative Wigner functions are equivalent in all odd dimensions \cite{Gross}. 

\subsection{SQM - updating rules}
Stabilizer quantum mechanics is a subtheory of quantum mechanics where we only consider common eigenstates of tensors of Pauli operators, unitaries belonging to the Clifford group, and Pauli measurements \cite{GottesPhD}. We can always write a stabilizer state $\rho$ as \begin{equation}\label{stab}\rho=\frac{1}{\mathcal{N}} \rho_1\cdot \rho_2\cdot \dots \cdot \rho_N,\end{equation} where $\mathcal{N}=Tr[\rho_1\cdot \rho_2\cdot \dots \cdot \rho_N],$ $j \in \{1,\dots, N\le n\},$ $n$ is the number of qudits and \begin{equation}\label{stabTwo} \rho_j=(\mathbb{I}_d+g_j+g_j^2+\dots+g_j^{d-1}),\end{equation} where $g_j$ is a stabilizer generator, more precisely a Weyl operator: \begin{equation}\label{weyl} \hat{W}(\bold{\lambda})=\chi(pq)\hat{S}(q)\hat{B}(p),\end{equation}
where $\chi(pq)=e^{\frac{2\pi i}{d}pq},$ $q,p$ are the coordinates of the phase space point $\lambda=(q,p),$ and  $\hat{S},\hat{B}$ are respectively the shift and boost operators (generalised Pauli operators) and the arithmetics is modulo $d,$ \begin{equation} \hat{S}(q)=\sum_{q'\in\mathbb{Z}_d}\ket{q'-q}\bra{q'}
\end{equation} 
\begin{equation}
\hat{B}(p)=\sum_{q\in\mathbb{Z}_d}\chi(pq)\ket{q}\bra{q}.
\end{equation}
When considering more than one qudit, the Weyl operator is given by the tensor product of the single Weyl operators. 
We can write the stabilizer state $\rho$ in a more compact way as \begin{equation}\label{CompactStab}\rho = \frac{1}{\mathcal{N}}\prod_{j}^{n}\sum_{i}^{d-1} g_{j}^{i}.\end{equation}
However we will mostly use the following notation in terms of stabilizer generators, \begin{equation}\rho \rightarrow \left\langle g_1,\dots, g_N\right\rangle .\end{equation}  
We now analyse the updating rules for the state $\rho$ under the stabilizer measurement $\Pi,$ \begin{equation}\Pi \rightarrow \left\langle p_1,\dots, p_M\right\rangle \end{equation} where $p_k$ is a stabilizer generator of $\Pi$ and $k \in \{1,\dots, M\le n\}.$
We analyse the updating rules first in the commuting case ($[\rho,\Pi]=0$) and then in the general case.
\begin{enumerate}

\item For \emph{non-disturbing} (commuting) measurements, the state after measurement $\rho'$ is given by adding the stabilizer generators of the measurement $\Pi$ and the state $\rho,$ unless some generators coincide. In the latter case we obviously count them only once. \begin{equation}\label{stabcomrule}\rho' \rightarrow \left\langle g_1,g_2,\dots, g_N, p_1,p_2,\dots, p_M\right\rangle, \end{equation} where we have here considered the case in which no generators coincide. This formula means that the state $\rho'$ is now \[\rho'= \frac{1}{\mathcal{N}}\prod_{j}^{N^*}\sum_{i}^{d-1} r_{j}^{i}, \] where $N^*=N+M$ and $r_j$ is a stabilizer generator of $\rho',$ \emph{i.e.} it is either a valid (commuting) generator $g_j$ or $p_j.$  In the case where \emph{e.g.} $F$ generators coincide, then $N^*=N+M-F.$

\item For \emph{disturbing} (non-commuting) measurements (the most general case) the idea is that if we remove the non-commuting factors $\rho_j$ from the state $\rho,$ \emph{i.e.} $[\rho_j,\Pi]\neq 0,$ this case reduces to the previous commuting one. We assume the state $\rho$ to have only one non-commuting factor, say $\rho_N,$ which corresponds to the stabilizer generator $g_N.$ 
The state after measurement $\rho'$ is given by removing the non-commuting generator and adding the remaining ones of the state and measurement, unless some generators coincide. In the latter case we obviously count them only once.  
\begin{equation}\label{stabnoncomrule}\rho' \rightarrow \left\langle g_1,g_2,\dots, g_{N-1}, p_1,p_2,\dots, p_M\right\rangle, \end{equation} where we have here considered the case in which no generators coincide. This formula means that the state $\rho'$ is now \[\rho'=\frac{1}{\mathcal{N}} \prod_{j}^{N^*}\sum_{i}^{d-1} r_{j}^{i}, \] where $N^*=N+M-1$ and $r_j$ is a stabilizer generator of $\rho',$ \emph{i.e.} it is either a valid (commuting) generator $g_j$ or $p_j.$  In the case where \emph{e.g.} $F$ generators coincide, then $N^*=N+M-1-F.$
\end{enumerate}
To sum up, in the commuting case we add generators of state and measurement to obtain the state after measurement. In the non-commuting case we remove the non-commuting generator of the state and add all the others as in the commuting case. This structure is perfectly analogue to Spekkens' updating rules, which are just motivated by the classical complementarity principle.


\subsection{Gross' Wigner functions - updating rules}

\emph{Gross theory}. In Gross' theory the Wigner function of a state $\rho$ in a point of the phase space $\lambda\in\Omega$ is given by \begin{equation}\label{Wigner}W_{\rho}(\lambda)=Tr[\hat{A}(\lambda)\rho],\end{equation} where $\hat{A}(\lambda)$ is the phase point operator associated to each point $\lambda$, \begin{equation}\hat{A}(\lambda)=\frac{1}{d^n}\sum_{\lambda'\in\Omega}\chi(\left\langle \bold{\lambda},\bold{\lambda'}\right\rangle)\hat{W}(\bold{\lambda'}),\end{equation} where $\hat{W}(\bold{\lambda})$ are the Weyl operators defined in equation \eqref{weyl}. Note that the normalisation is such that $Tr[\hat{A}(\lambda)]=1.$
We recall that a stabilizer state is a joint eigenstate of a set of commuting Weyl operators.
Two Weyl operators commute if and only if the corresponding phase-space points $\bold{a},\bold{a'}$ have vanishing symplectic inner product:
\begin{equation} [\hat{W}(\bold{a}),\hat{W}(\bold{a'})]=0 \textrm{ if and only if } \left\langle{\bold{a},\bold{a'}}\right\rangle=\bold{a}^TJ\bold{a'}=0. \end{equation}
This result derives from the product rule of Weyl operators: \[\hat{W}(\bold{a})\hat{W}(\bold{a'})=\chi(\left\langle{\bold{a},\bold{a'}}\right\rangle)\hat{W}(\bold{a}+\bold{a'}).\]
From this result, the sets of commuting Weyl operators, and, as a consequence, the stabilizer states, are parametrized by the isotropic subspace $M$ of $\Omega.$ More precisely, for each $M$ and each $\bold{w}\in \Omega$ we can define a stabilizer state (Gross construction) $\rho_{M,\bold{w}}$ as the projector onto the joint eigenspace spanned by $\{\hat{W}(\bold{a}):\bold{a}\in M\},$ where $\hat{W}(\bold{a})$ has eigenvalue $\chi(\left\langle{\bold{w},\bold{a'}}\right\rangle).$ The Wigner function associated to the state $\rho_{M,\bold{w}}$ is always positive (necessary and sufficient condition in odd dimensions) and it is of the kind \begin{equation}\label{GrossWf}W_{(m,\bold{w})}(\bold{\lambda})=\frac{1}{d^n}\delta_{M^{C}+\bold{w}}(\lambda),\end{equation} where $M^{C}$ is the symplectic complement of $M.$ 
Moreover the transformations that preserve the positivity of the Wigner functions are the Clifford unitaries. Gross' theory of non-negative Wigner functions is a faithful way of representing SQM.  

\emph{Equivalence of ST and GT}. The Wigner function \eqref{GrossWf} has the same form of the probability distribution \eqref{distribution} associated to the epistemic state $(V,\bold{w})$ in Spekkens' theory. More precisely, they are equivalent if we assume $M=JV,$\footnote{Note that the action of $J$ is simply to map a variable into its conjugated.} indeed this transformation implies that $V^{\perp}=M^C.$  
The equivalence between Gross' theory and Spekkens theory, using the symplectic matrix $J$ as the bridge, also extends in terms of transformations and measurement statistics \cite{Spek2}. This equivalence also implies the equivalence between Spekkens' theory and SQM in odd dimensions. Therefore we can see the description based on known variables (Spekkens) and the description based on Wigner functions (Gross) as two equivalent descriptions of stabilizer quantum mechanics in odd dimensions. We will now translate the already found updating rules of ST into Gross' Wigner functions.

\emph{Updating rules}. Let us consider a stabilizer state $\rho=\rho_1\cdot \rho_2\cdot \dots \cdot \rho_n,$ where $n$ is the number of qudits (odd \emph{prime} dimensions), and a measurement $\Pi$ on the stabilizer state $\Pi=\Pi_1\cdot \Pi_2 \dots \cdot \Pi_m,$ where, in general, $m\leq n.$ Let us assume $m=n$ in order to consider "total" measurements (not only to a part of the state).

\newtheorem{CommutingTheorem}[Theorem]{Theorem}
\begin{CommutingTheorem}\label{CommutingTheoremWF}
\emph{Commuting case}. Let us assume the state and measurement to commute, \emph{i.e.} $[\rho,\Pi]=0.$ The Wigner function of the state after measurement is \begin{equation}\label{productrule} W_{\rho'}(\bold{\lambda})=  \frac{1}{N} W_{\rho}(\bold{\lambda})R_{\Pi}(\bold{\lambda}),\end{equation} where $\bold{\lambda}\in\Omega$ and $R_{\Pi}$ denotes the Wigner function (also called response function) associated with the measurement $\Pi.$ The normalisation factor $N$ is \[N=\sum_{\lambda\in\Omega}W_{\rho}(\bold{\lambda})R_{\Pi}(\bold{\lambda}).\]
\end{CommutingTheorem}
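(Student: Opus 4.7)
My plan is to exploit the explicit structural form of Gross' Wigner function for stabilizer states, namely $W_{\rho}(\lambda)=\frac{1}{d^n}\delta_{V^{\perp}+\bold{w}}(\lambda)$ under the identification $M=JV$ already established in the text, and similarly write the response function of the stabilizer measurement element as $R_{\Pi}(\lambda)=\delta_{V_{\Pi}^{\perp}+\bold{r}}(\lambda)$. Both objects are indicator functions (up to a constant) of affine subspaces of $\Omega$, and this is the structural fact that makes the argument go through cleanly.

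Next, I would multiply these two expressions pointwise. Because the product of two indicator functions is the indicator function of the intersection of their supports, the product $W_{\rho}(\lambda)R_{\Pi}(\lambda)$ is supported exactly on the affine set $(V^{\perp}+\bold{w})\cap(V_{\Pi}^{\perp}+\bold{r})$ and takes the constant value $\frac{1}{d^n}$ there. This is where I invoke Theorem \ref{CommutingTheorem} of the prime-dimensional Spekkens commuting case, which identifies precisely this intersection with $V'^{\perp}+\bold{w'}$, the affine support of the post-measurement Spekkens state $(V',\bold{w'})$.

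Translating back through the ST-GT equivalence, this support is exactly the support of the updated Wigner function $W_{\rho'}$. Hence $W_{\rho}(\lambda)R_{\Pi}(\lambda)$ is proportional to $W_{\rho'}(\lambda)$, with the ratio being a global constant that depends only on the cardinality of $V'^{\perp}$. Fixing the constant by requiring $\sum_{\lambda}W_{\rho'}(\lambda)=1$ forces the proportionality constant to be $N=\sum_{\lambda\in\Omega}W_{\rho}(\lambda)R_{\Pi}(\lambda)$, which simultaneously yields the Born-rule interpretation of $N$ as the probability of obtaining outcome $\Pi$ on state $\rho$. This is the classical-probabilistic Bayesian update in disguise, which is unsurprising since Gross' Wigner function is a genuine probability distribution in odd dimensions.

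The only real subtlety, and where I would focus attention, is the bookkeeping of the $d^n$ normalization prefactors and cardinalities $|V^{\perp}|$, $|V_{\Pi}^{\perp}|$, $|V'^{\perp}|$, particularly because the pre-factor $\frac{1}{d^n}$ in the Wigner-function formula is only compatible with $\sum W=1$ when the defining isotropic subspace is maximally isotropic; partial measurements require carefully tracking the dimension count. Once this is done cleanly, the result is immediate from the Spekkens commuting updating rule plus the ST-GT dictionary, and no genuinely new argument beyond Theorem \ref{CommutingTheorem} is needed.
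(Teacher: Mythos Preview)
Your proposal is correct and follows essentially the same route as the paper: both arguments replace the Wigner functions by their indicator-function form on the affine sets $V^{\perp}+\bold{w}$ and $V_{\Pi}^{\perp}+\bold{r}$, observe that the pointwise product is the indicator of the intersection, invoke Theorem~\ref{CommutingTheorem} to identify that intersection with $V'^{\perp}+\bold{w'}$, and then append the normalisation at the end. Your remark about the $d^n$ prefactor only being the correct normaliser for maximally isotropic $V$ is a fair caveat, but the paper sidesteps it in exactly the way you do, by absorbing everything into the global constant $N$.
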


\begin{proof}
We rewrite the formula \eqref{productrule} by replacing the Wigner functions with their definition in terms of Spekkens' subspaces, \begin{equation}\label{deltas}\delta_{\lambda,V'^{\perp}+\bold{w'}}=\delta_{\lambda,V^{\perp}+\bold{w}}\cdot\delta_{\lambda,V_{\Pi}^{\perp}+\bold{r}}.\end{equation}
The proof is straightforward. The RHS is one if and only if both the deltas are one; this means that $\lambda$ has to belong simultaneously to $V^{\perp}+\bold{w}$ and $V_{\Pi}^{\perp}+\bold{r},$ \emph{i.e.} $\lambda\in (V^{\perp}+\bold{w})\cap (V_{\Pi}^{\perp}+\bold{r}).$ If we recall equation \eqref{ultimate} (and figure \ref{Venn}), we see that \[(V^{\perp}+\bold{w})\cap(V_{\Pi}^{\perp}+\bold{r})=(V'^{\perp})+\bold{w'},\] and we can conclude that the RHS of equation \eqref{deltas} is one if and only if the LHS is one.
At this point we can insert the normalisation factors on the RHS and the LHS. These guarantee that $\sum_{\lambda\in\Omega}W_{\rho'}(\lambda)=1$ and the uniformity as expected.
\end{proof}

In the commuting case the updating rule in SQM consists of the \emph{addition} of the stabilizer generators of state and measurement (equation \eqref{stabcomrule}). In ST the updating rule consists of the \emph{intersection} of the perpendicular isotropic subspaces (equation \eqref{ultimate}). In GT addition and intersection translate into the \emph{product} of the Wigner functions (equation \eqref{productrule}). In particular this stage consists of introducing zeros to the Wigner function in correspondence of the addition of generators to the subspace of known variables $V$ (and so removing generators from the subspace $V^{\perp}$). We will call this process - where we \emph{learn} information about the state - the \emph{localization stage}.

\newtheorem{MainTheorem}[Theorem]{Theorem}
\begin{MainTheorem}\label{MainTheorem}
\emph{Non-commuting case}. Let us assume the measurement, in general, not to commute with the state, \emph{i.e.} $[\rho,\Pi]\neq 0.$
The Wigner function of the state after measurement is \begin{equation}\label{main} W_{\rho'}(\bold{\lambda})= \frac{1}{N}\sum_{\bold{t}\in V_{other}}W_{\rho}(\bold{\lambda} - \bold{t})R_{\Pi}(\bold{\lambda}),\end{equation} where $\bold{\lambda}\in \Omega,$ $V_{other}$ is the set spanned by the non-commuting generators of Spekkens' subspace $V$ associated to the state $\rho.$ 
The normalisation factor $N$ is \[N=\sum_{\lambda\in\Omega}\sum_{\bold{t}\in V_{other}}W_{\rho}(\bold{\lambda} - \bold{t})R_{\Pi}(\bold{\lambda}).\]
\end{MainTheorem}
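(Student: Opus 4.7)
The plan is to reduce the statement to Theorem \ref{NonCommutingTheorem} by translating the ST identity
\[ V'^{\perp}+\bold{w'} = (V_{commute}^{\perp}+\bold{w})\cap(V_{\Pi}^{\perp}+\bold{r}), \qquad V_{commute}^{\perp}=V^{\perp}\oplus V_{other}, \]
into the Wigner--function language via the dictionary $W_{(V,\bold{w})}(\bold{\lambda}) = \frac{1}{d^n}\delta_{V^{\perp}+\bold{w}}(\bold{\lambda})$ coming from equations \eqref{distribution} and \eqref{GrossWf} (with $M=JV$, so $M^{C}=V^{\perp}$). The commuting-case argument of Theorem \ref{CommutingTheoremWF} already handles the intersection with the measurement as a pointwise product; the only genuinely new element is to represent the enlargement $V^{\perp}\rightsquigarrow V_{commute}^{\perp}$ by a sum of translates of $W_{\rho}$.

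First I would establish the identity
\[ \sum_{\bold{t}\in V_{other}}\delta_{V^{\perp}+\bold{w}}(\bold{\lambda}-\bold{t}) \;=\; \delta_{V_{commute}^{\perp}+\bold{w}}(\bold{\lambda}). \]
Since $\delta_{V^{\perp}+\bold{w}}(\bold{\lambda}-\bold{t})=\delta_{V^{\perp}+\bold{w}+\bold{t}}(\bold{\lambda})$, the left hand side is the sum of the indicator functions of the cosets $V^{\perp}+\bold{w}+\bold{t}$ as $\bold{t}$ ranges over $V_{other}$. By Lemma \ref{LemmaNonCommuting} and the direct sum structure $V_{commute}^{\perp}=V^{\perp}\oplus V_{other}$ (equation \eqref{ultimateBis}), these cosets are pairwise disjoint and their union is exactly $V_{commute}^{\perp}+\bold{w}$, so the sum of indicators coincides with the indicator of the union. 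Hence, up to the normalisation factor $d^{-n}$,
\[ \sum_{\bold{t}\in V_{other}} W_{\rho}(\bold{\lambda}-\bold{t}) \;=\; \tfrac{1}{d^n}\,\delta_{V_{commute}^{\perp}+\bold{w}}(\bold{\lambda}). \]

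Next I would multiply both sides by the response function $R_{\Pi}(\bold{\lambda})$, which by the same dictionary is (up to normalisation) $\delta_{V_{\Pi}^{\perp}+\bold{r}}(\bold{\lambda})$. A product of two indicator functions is the indicator of the intersection of their supports, so
\[ \Bigl[\sum_{\bold{t}\in V_{other}} W_{\rho}(\bold{\lambda}-\bold{t})\Bigr]\,R_{\Pi}(\bold{\lambda}) \;\propto\; \delta_{(V_{commute}^{\perp}+\bold{w})\cap(V_{\Pi}^{\perp}+\bold{r})}(\bold{\lambda}) \;=\; \delta_{V'^{\perp}+\bold{w'}}(\bold{\lambda}), \]
where the last equality is precisely the content of Theorem \ref{NonCommutingTheorem}. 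Recognising the right hand side as $W_{\rho'}(\bold{\lambda})$ up to the overall scale, and fixing that scale by requiring $\sum_{\bold{\lambda}\in\Omega}W_{\rho'}(\bold{\lambda})=1$, gives the normalisation $N=\sum_{\bold{\lambda}\in\Omega}\sum_{\bold{t}\in V_{other}}W_{\rho}(\bold{\lambda}-\bold{t})R_{\Pi}(\bold{\lambda})$ stated in the theorem, and the formula \eqref{main} follows.

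The only step that requires real care, and the one I would expect to be the main obstacle, is the coset-disjointness claim underlying the first identity: one must check that the translates $V^{\perp}+\bold{w}+\bold{t}$ used in the sum are distinct, so that the sum of their indicator functions is again an indicator (rather than a function with multiplicities) and the commuting-case product rule of Theorem \ref{CommutingTheoremWF} can be applied termwise. This is guaranteed precisely by the direct-sum decomposition $V_{commute}^{\perp}=V^{\perp}\oplus V_{other}$ and by Lemma \ref{LemmaNonCommuting}, which together ensure that $V_{other}$ intersects $V^{\perp}$ only in $\bold{0}$; this is also where the restriction to odd (prime) dimensions enters, via the existence of the vectors $\gamma_l$ appearing in the construction of $V_{other}$. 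Once this is in hand, the rest is a direct reshuffling of delta functions and normalisations.
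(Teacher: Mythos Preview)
Your proposal is correct and follows essentially the same approach as the paper: both establish the key identity $\sum_{\bold{t}\in V_{other}}\delta_{V^{\perp}+\bold{w}}(\bold{\lambda}-\bold{t})=\delta_{V_{commute}^{\perp}+\bold{w}}(\bold{\lambda})$ and then use the product-of-indicators/intersection correspondence to recover $W_{\rho'}$ up to normalisation. The only cosmetic difference is that the paper first invokes the SQM factorisation $\rho'=\rho^{*}\Pi$ (with $\rho^{*}$ the state with the non-commuting generator removed) and then applies Theorem~\ref{CommutingTheoremWF} to the commuting pair $(\rho^{*},\Pi)$, whereas you appeal directly to Theorem~\ref{NonCommutingTheorem}; the underlying delta-function manipulation is identical, and your explicit attention to coset disjointness is a point the paper leaves implicit.
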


Note that we could have stated the theorem in terms of stabilizer generators instead of Spekkens' generators. The former being related to the latter as follows, \begin{equation} \label{relgen} g_j=\hat{W}(J^{-1}\bold{\Sigma}_j),\end{equation} where $J$ is the usual symplectic matrix, $\Sigma_j$ are Spekkens' generators and $g_j$ the corresponding stabilizer generators. The relation \eqref{relgen} follows from the relation between ST and GT previously described, where the bridge between the two formulations is given by the matrix $J.$

\begin{proof}
In general the state after measurement in quantum mechanics (up to a normalization) is $\rho'=\Pi\rho\Pi.$ If $[\rho,\Pi]=0$ then $\rho'=\rho\Pi.$ 

In order to simplify the proof, let us assume the case of only one non-commuting generator, say $\rho_n.$ In the present case we know, from the structure of SQM and Spekkens' updating rules (adding the commuting factors between state and measurement and removing the non-commuting ones), that the state after measurement is $\rho'=\rho^*\Pi,$ where $\rho^*=\rho_1\cdot\dots\rho_{n-1}.$
This means that we can write the state after measurement as a product of two commuting terms: $\rho^*$ and $\Pi.$ Therefore we can write the Wigner function of $\rho'$ according to the product rule for the commuting case (equation \eqref{productrule}): \[W_{\rho'}(\bold{\lambda})=\frac{1}{N}W_{\rho^*}(\bold{\lambda})R_{\Pi}(\bold{\lambda}),\] where $N=\sum_{\bold{\lambda}}W_{\rho^*}(\bold{\lambda})R_{\Pi}(\bold{\lambda}).$ 
We want now to prove that equation \eqref{main} is equal to the latter. This means we want to prove the following:
\[W_{\rho'}(\bold{\lambda})=\sum_{\bold{t}\in V_{other}}W_{\rho}(\bold{\lambda} - \bold{t})R_{\Pi}(\bold{\lambda})=W_{\rho^*}(\bold{\lambda})R_{\Pi}(\bold{\lambda}).\]
We can simplify the terms $R_{\Pi}(\bold{\lambda}),$ thus getting \begin{equation}\label{proof}\sum_{\bold{t}\in V_{other}}W_{\rho}(\bold{\lambda} - \bold{t})=W_{\rho^*}(\bold{\lambda}).\end{equation}

At this point, in order to prove the above theorem, we rewrite the formula \eqref{main} by replacing the Wigner functions with their definition, \emph{i.e.} Kronecker deltas,
\begin{equation}\label{deltinas}\sum_{t\in V_{other}}\delta_{\bold{\lambda}-\bold{t},V^{\perp}+\bold{w}}=\delta_{\lambda,V_{commute}^{\perp}+\bold{w}},\end{equation} where $V_{commute}^{\perp} = V^{\perp}\oplus V_{other}.$
Note that we have removed the response function of the measurement. This also implies that we do not have to change $\bold{w},$ because we have only modified $V^{\perp}$ into $V^{\perp}_{commute}$ and $\bold{w'}$ is not affected. 
We now want to see that the LHS of equation \eqref{deltinas} is different from zero exactly when the RHS is. The LHS is different from zero when at least one $\bold{t}\in V_{other}$ is such that $\bold{\lambda}-\bold{t}\in V^{\perp}+\bold{w}.$ The latter corresponds to $\bold{\lambda}\in V^{\perp}+\bold{w} + \bold{t}.$ This means that $\bold{\lambda}\in V^{\perp}\oplus V_{other}+\bold{w},$ \emph{i.e.} $\bold{\lambda}\in V_{commute}^{\perp}+\bold{w},$ which is precisely what makes the RHS different from zero. 

\end{proof}

In the most general non-commuting case, in addition to the localization stage, in SQM we also have to \emph{remove} the non-commuting generators from the state (equation \eqref{stabnoncomrule}). In ST this consists of the \emph{union and shifts} in the perpendicular subspace (equation \eqref{CommuteGamma}). In GT removal and union translate into the \emph{averaging out} of the Wigner function (equation \eqref{main}). In particular this stage consists of introducing ones to the Wigner function in correspondence of the removal of generators from the subspace of known variables $V$ (and so adding generators to the subspace $V^{\perp}$). We can think of this process as the one where, after having learned some information in the localization stage, we need to forget something, otherwise we would get too much information about the ontic state, which is forbidden by the classical complementarity principle. This also explains why non-commuting measurements are also called \emph{disturbing} measurements. We will call this forgetting-part of the process the \emph{randomization stage}. Finally note that the general-case formula \eqref{main} reduce to the product rule \eqref{productrule} in the commuting case.
Figure \ref{Final} summarises the updating rules in the three theories in prime dimensions.

\begin{figure*}[h!]
\centering

{\includegraphics[width=1.0\textwidth,height=.49\textheight]{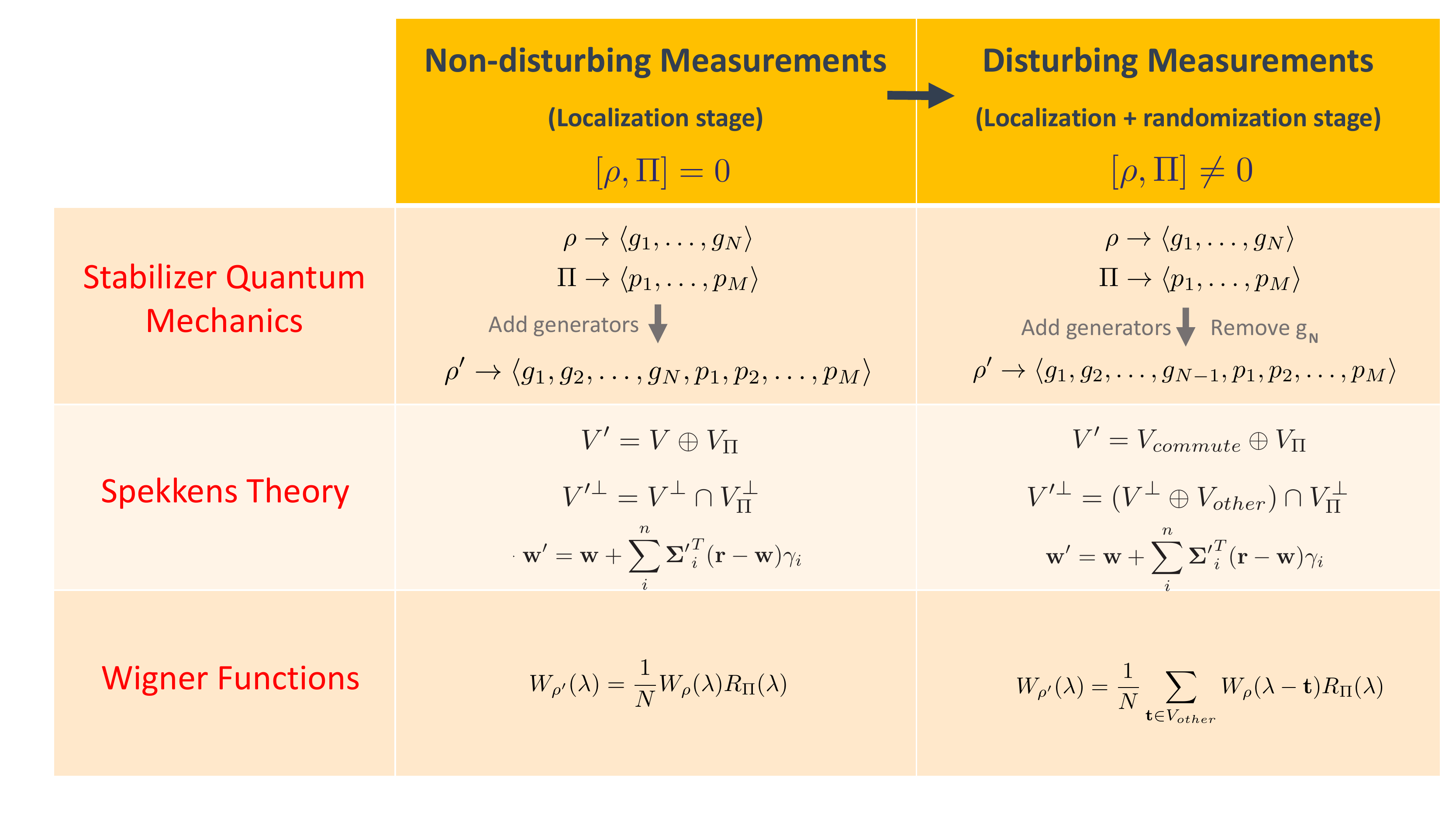}}

\caption{\footnotesize{\textbf{Equivalence of three theories in odd dimensions in terms of measurement updating rules: Spekkens' toy model, stabilizer quantum mechanics and Gross' theory.} The table above shows the updating rules in the three mentioned theories in odd prime dimensions both for the commuting and the more general non-commuting case. In SQM the updating rules were already known: if state and measurement commute then the final state $\rho'$ is given by the stabilizer generators of both $\rho$ and $\Pi.$ If, more generally, they do not commute, we also need to remove the non-commuting generators ($g_N$ in the table above) of the original state.  
In Spekkens' model the updating rules for the epistemic state $(V,\bold{w})$ and the measurement $(V_{\Pi},\bold{r})$ have the same structure of the ones in SQM. 
At the level of the perpendicular subspaces, the updating rules involve the intersection and also the direct sum (union and shifts) of the state perpendicular subspace $V^{\perp}$ with the non-commuting subspace $V_{other}$. The updating rules for the representative ontic vector $\bold{w}$ are written in terms of the measurement generators $\Sigma'_i$ and the vector $\gamma_i$ such that $\bold{\Sigma'}^{T}_i\bold{\gamma}_i=1.$
The table above does not show, for aesthetics reasons, the influence of the shift vectors $\bold{w},\bold{r},\bold{w'}$ on the perpendicular subspaces. The actual updating rule would be $V'^{\perp}=(V_{commute}^{\perp}+\bold{w}-\bold{w'})\cap(V_{\Pi}^{\perp}+\bold{r}-\bold{w'}).$
In Gross' theory the updating rule for Wigner functions of stabilizer states are given by a simple product of the Wigner functions associated to the state, $W_{\rho},$ and measurement, $R_{\Pi},$ in the commuting case, and an averaging over the non-commuting subspace $V_{other}$ in the general case. It is easy to see that the latter formula reduces to the previous in the commuting case (\emph{i.e.} $V_{other}=\{(0,0)\}$). 
}}

\label{Final}
\end{figure*}

In the non-prime dimensional case, we can rephrase all the reasonings already done in ST in terms of Wigner functions. 

\newtheorem{LemmaCoarse}[LemmaNonCommuting]{Lemma}
\begin{LemmaCoarse}\label{LemmaCoarse}
The Wigner function $W_{cg}(\lambda)$ of the coarse-graining observable $O_{cg}=a_1X_1+b_1P_1+\dots +a_nX_n+b_nP_n=D(a'_1X_1+b'_1P_1+\dots +a'_nX_n+b'_nP_n)=\sigma_{cg},$ can be written in terms of the Wigner functions $W^{(j)}_{fg}(\lambda)$ of the associated fine graining observables $O^{(j)}_{fg}=a'_1X_1+b'_1P_1+\dots +a'_nX_n+b'_nP_n=\sigma^{(j)}_{fg}$ as \begin{equation}\label{WignerNonPrime}W_{cg}(\lambda)=\frac{1}{\bar{D}}\sum_{j=0}^{\bar{D}-1}W^{(j)}_{fg}(\lambda).\end{equation}
\end{LemmaCoarse}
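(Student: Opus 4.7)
The plan is to leverage the GT/ST dictionary established earlier, so that Wigner functions translate into normalized indicator functions of affine subspaces, and then to reduce the identity (\ref{WignerNonPrime}) to the set-theoretic decomposition (\ref{FineSpek})--(\ref{ShiftFine}) already in hand. First I would write both sides of the claim as (rescaled) Kronecker deltas: up to a normalisation that we will fix at the end,
\begin{equation}
W_{cg}(\lambda) \;\propto\; \delta_{V_{cg}^{\perp}+\mathbf{r}_{cg}}(\lambda),
\qquad
W^{(j)}_{fg}(\lambda) \;\propto\; \delta_{V_{fg}^{\perp}+\mathbf{r}^{(j)}_{fg}}(\lambda),
\end{equation}
exactly in the form of (\ref{GrossWf}) under the $M=JV$ correspondence.

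Next I would invoke the structural result (\ref{FineSpek}) together with the definition (\ref{ShiftFine}) of the fine-graining shift vectors to write the support of $W_{cg}$ as a \emph{disjoint} union of the supports of the $W^{(j)}_{fg}$:
\begin{equation}
V_{cg}^{\perp}+\mathbf{r}_{cg}
=\bigsqcup_{j=0}^{\bar{D}-1}\bigl(V_{fg}^{\perp}+\mathbf{v}_j+\mathbf{r}_{cg}\bigr)
=\bigsqcup_{j=0}^{\bar{D}-1}\bigl(V_{fg}^{\perp}+\mathbf{r}^{(j)}_{fg}\bigr).
\end{equation}
Disjointness is precisely the content of the direct-sum statement $V_{cg}^{\perp}=V_{fg}^{\perp}\oplus V_{D}$: distinct representatives $\mathbf{v}_j\in V_{D}$ lie in distinct cosets of $V_{fg}^{\perp}$, so each $\lambda$ in $V_{cg}^{\perp}+\mathbf{r}_{cg}$ belongs to exactly one of the $\bar{D}$ sets on the right. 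The indicator function of a disjoint union is the sum of the indicator functions, so pointwise
\begin{equation}
\delta_{V_{cg}^{\perp}+\mathbf{r}_{cg}}(\lambda)
=\sum_{j=0}^{\bar{D}-1}\delta_{V_{fg}^{\perp}+\mathbf{r}^{(j)}_{fg}}(\lambda).
\end{equation}

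Finally I would fix the normalisation by counting support sizes. From the size formulas recorded just after (\ref{FineSpek}), the support of each $W^{(j)}_{fg}$ has cardinality $d^{n}$ while the support of $W_{cg}$ has cardinality $\bar{D}^{n}d^{n}$, so writing each response function as its indicator divided by the size of its support (which is what the GT construction of (\ref{GrossWf}) enforces in the pure, sharp case) produces exactly the combinatorial ratio that yields the prefactor $1/\bar{D}$ in (\ref{WignerNonPrime}) for a single system, and more generally $1/\bar{D}^{n}$ for $n$ systems under the same naming convention adopted in the lemma.

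The main obstacle I anticipate is the normalisation bookkeeping rather than the set-theoretic step: $W_{cg}$ is a response function attached to a measurement outcome, not a state Wigner function, so one must verify that the constant of proportionality between $W_{cg}$ and $\delta_{V_{cg}^{\perp}+\mathbf{r}_{cg}}$ is indeed the reciprocal of the support size, and that this is consistent with the choice made for the fine-graining response functions. Once this consistency check is performed, the lemma is an immediate functional translation of the disjoint-coset decomposition (\ref{FineSpek}).
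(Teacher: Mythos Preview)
Your proposal is correct and follows essentially the same route as the paper: both translate the Wigner functions into indicator functions of the affine sets $V_{cg}^{\perp}+\mathbf{r}_{cg}$ and $V_{fg}^{\perp}+\mathbf{r}^{(j)}_{fg}$, then invoke the coset decomposition (\ref{FineSpek})--(\ref{ShiftFine}) to match supports, and finally adjust normalisations. If anything, you are more explicit than the paper about why the union is disjoint (via the direct-sum structure $V_{cg}^{\perp}=V_{fg}^{\perp}\oplus V_D$), which the paper leaves implicit. One small caution: the lemma as stated concerns a \emph{single} coarse-graining observable on $n$ systems, so $|V_D|=\bar D$ and the prefactor is $1/\bar D$, not $1/\bar D^{n}$; the $\bar D^{n}$ size formulas quoted after (\ref{FineSpek}) refer to a different situation (an $n$-generator measurement), so your hedge about the $n$-system prefactor is unnecessary here.
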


\begin{proof}
First of all the normalisation factor $\frac{1}{\bar{D}}$ is due to the fact that we are adding $\bar{D}$ Wigner functions, each of them having a normalisation factor of $\frac{1}{d},$ since they are Wigner functions of maximally isotropic subspaces (of dimension $d$).
The proof of the rest of the formula is straightforward. According to the definition of Wigner functions, we need to prove that \begin{equation}\label{deltaNonPrime}\delta_{V^{\perp}_{cg}+\bold{r}_{cg}}\propto\sum_{j}\delta_{V^{\perp}_{fg}+\bold{r}^{(j)}_{fg}}.\end{equation} From the decomposition of the isotropic subspaces and shift vectors in Spekkens' model, equations \eqref{FineSpek} and \eqref{ShiftFine}, we already know that $V^{\perp}_{cg}+\bold{r}_{cg}=V^{\perp}_{fg}\oplus V_D +\bold{r}_{cg} = V^{\perp}_{fg} + \sum_{j=0}^{\bar{D}-1}(\bold{r}_{cg}+j\bold{v}),$ which exactly proves that the RHS of \eqref{deltaNonPrime} is one if and only if the LHS is one.
\end{proof}

From the above construction and theorem \ref{FinalbisTheorem} we can immediately write the Wigner function of a stabilizer state after a coarse-graining measurement, thus generalising theorem \ref{MainTheorem}.

\newtheorem{FinalWignerTheorem}[Theorem]{Theorem}
\begin{FinalWignerTheorem}\label{FinalWignerTheorem}
The Wigner function of the state $\rho$ of $n$-qudit systems, where the dimension $d$ is a non-prime intger, after the (non-commuting) measurement $\Pi$ is given by
\begin{equation}\label{mainNonPrime} W_{\rho'}(\bold{\lambda})= \frac{1}{N}\frac{1}{\bar{D}}\sum_{\bold{t}\in V_{other}}\sum^{\bar{D}-1}_{j=0}W_{\rho}(\bold{\lambda} - \bold{t})R^{(j)}_{fg}(\bold{\lambda}),\end{equation} where $\bold{\lambda}\in \Omega,$ $V_{other}$ is the set spanned by the non-commuting generators of Spekkens' subspace $V$ associated to the state $\rho.$ The response function of the $j-th$ fine-graining measurement is denoted by $R^{(j)}_{fg}.$ 
The normalisation factor $N$ is \[N=\sum_{\lambda\in\Omega}\sum_{\bold{t}\in V_{other}}W_{\rho}(\bold{\lambda} - \bold{t})R_{\Pi}(\bold{\lambda}),\] where $R_{\Pi}(\bold{\lambda})=\frac{1}{\bar{D}}\sum^{\bar{D}-1}_{j=0}R^{(j)}_{fg}(\bold{\lambda}).$
\end{FinalWignerTheorem}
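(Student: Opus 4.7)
The plan is to combine three ingredients established earlier in the paper: the non-commuting Wigner update of Theorem \ref{MainTheorem}, which contributes the averaging over $V_{other}$; the coarse-graining decomposition of response functions from Lemma \ref{LemmaCoarse}, which replaces $R_\Pi$ by an average of $\bar D$ fine-graining response functions; and the Spekkens' update rule of Theorem \ref{FinalbisTheorem}, which provides the consistency check on the support. The two structural operations in play -- removing non-commuting state generators, and splitting the coarse-graining measurement into its fine-graining constituents -- act on independent pieces (state versus measurement), so they should compose cleanly.

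First I would reproduce the core step from the proof of Theorem \ref{MainTheorem}: define $\rho^{*}$ as the state obtained from $\rho$ by dropping its factors that fail to commute with the coarse-graining measurement $\Pi$, so that $\rho' = \rho^{*} \Pi$ with $[\rho^{*},\Pi]=0$. The commuting product rule \eqref{productrule} then gives
\[
W_{\rho'}(\lambda) \;=\; \frac{1}{N}\, W_{\rho^{*}}(\lambda)\, R_{\Pi}(\lambda),
\]
and the Kronecker-delta identity \eqref{proof} rewrites $W_{\rho^{*}}$ as the $V_{other}$-shift sum
\[
W_{\rho^{*}}(\lambda) \;=\; \sum_{t\in V_{other}} W_{\rho}(\lambda - t),
\]
where $V_{other}$ is interpreted as the subspace of state generators that fail to Poisson-commute with the coarse-graining observable itself.

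Next I would substitute the coarse-graining decomposition of Lemma \ref{LemmaCoarse}, $R_{\Pi}(\lambda)=\frac{1}{\bar D}\sum_{j=0}^{\bar D -1} R_{fg}^{(j)}(\lambda)$, into the above product, rearrange the double sum, and absorb the factor $\tfrac{1}{\bar D}$ into the overall prefactor to obtain \eqref{mainNonPrime}. The normalization $N$ follows by imposing $\sum_{\lambda} W_{\rho'}(\lambda)=1$, and reduces to the stated expression once $R_\Pi$ is re-expanded. As a cross-check, the support of the resulting $W_{\rho'}$ should match $V'^{\perp}+\bold{w'}$ from Theorem \ref{FinalbisTheorem}: the $V_{other}$-shift sum implements the direct sum $V^{\perp}\oplus V_{other}=V^{\perp}_{commute}$, the multiplication by each $R_{fg}^{(j)}$ implements the intersection with the corresponding fine-graining support, and the sum over $j$ realizes the union in \eqref{utlimateBoom}.

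The main obstacle, in my estimation, is verifying that $V_{other}$ is the correct non-commuting subspace to average over when $\Pi$ is coarse-graining rather than fine-graining -- since in non-prime dimensions the factor $D$ can mask non-vanishing symplectic brackets, $V_{other}$ with respect to $O_{cg}$ can in principle be strictly smaller than the analogous subspace with respect to the $O_{fg}^{(j)}$. One must be careful to use the coarse-graining version throughout, matching the convention of Theorem \ref{FinalbisTheorem}. Once this point is pinned down, the remainder of the argument is bookkeeping: tracking the normalizations through the double sum and confirming that the Kronecker-delta support manipulations in equation \eqref{proof} and equation \eqref{deltaNonPrime} are compatible in the combined expression.
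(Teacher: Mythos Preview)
Your proposal is correct and follows essentially the same route as the paper: start from the non-commuting update of Theorem \ref{MainTheorem} (the $V_{other}$-averaged product with $R_\Pi$) and then substitute the decomposition $R_\Pi=\tfrac{1}{\bar D}\sum_j R^{(j)}_{fg}$ from Lemma \ref{LemmaCoarse}. The paper's proof is in fact a single sentence to this effect; your consistency check against Theorem \ref{FinalbisTheorem} and your caution about which $V_{other}$ to use are extra care beyond what the paper records.
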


\begin{proof}
We just need to apply lemma \ref{LemmaCoarse} to the response function of the coarse graining measurement of theorem \ref{MainTheorem}.
\end{proof}

Figure \ref{Finalbis} summarises the updating rules in ST and Gross' theory in prime and non-prime dimensions.

\begin{figure*}[h!]
\centering

{\includegraphics[width=1.0\textwidth,height=.49\textheight]{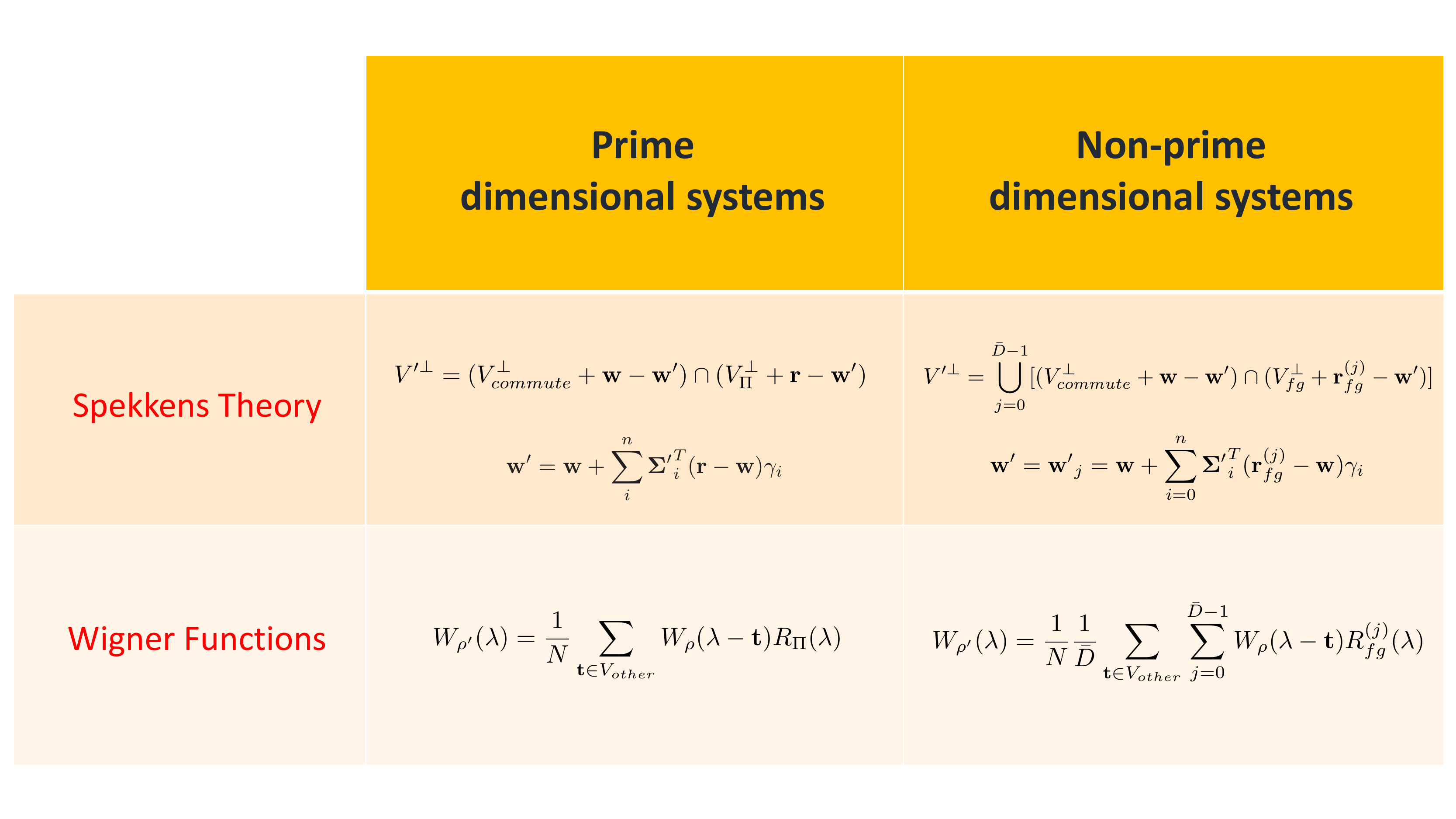}}

\caption{\footnotesize{\textbf{Measurement updating rules in Spekkens' toy model and Gross' theory in prime and non-prime dimensions.} The table above shows the updating rules (for the general non-commuting case) of ST and Gross' theory in prime dimensions, first column, and non-prime dimensions, second column. The former have been already depicted in table \ref{Final}. The latter regard the case of a coarse-graining measurement observable $O_{cg}$. In terms of perpendicular subspaces the updating rules consist of the union of the updating subspaces of the original state $(V,\bold{w})$ with each of the $\bar{D}$ individual fine-graining observables $(V_{fg},\bold{r}^{(j)}_{fg})$. The updated shift vector $\bold{w'}$ is just one of the updated shift vectors $\bold{w}'_{j}$ of the state with the fine-graining observables. In terms of Wigner functions, the union translates into a sum of $\bar{D}$ terms, and the response functions of the fine-graining observables are denoted as $R^{(j)}_{fg}.$}}

\label{Finalbis}
\end{figure*}

\section{Discussion}

The importance of completing Spekkens' theory with updating rules to determine the state after a sharp measurement relies on the possible applications of this theory for future works. In particular we think it is interesting to characterise ST in terms of its computational power, \emph{i.e.} exploring which are the quantum computational schemes that can be represented by ST. As an example ST can be used as a non-contextual hidden variable model to represent the \emph{classically simulable} part of some state-injection schemes, thus witnessing non-contextuality and also contributing to the aim of proving that contextuality is necessary for quantum speed-up in such schemes. 

The result about the equivalence between ST and SQM and the associated updating rules in prime and \emph{non-prime} odd dimensions can provide a powerful new way to use and analyse SQM in non-prime dimensions, about which almost nothing is known. For example we are now facilitated to state, given a set of commuting Pauli operators, whether the joint eigenstate that they represent is pure. In non-prime dimensions the latter issue is not trivial because for coarse-graining observables the number of independent generators is not equal to the number of observables. However, from our construction to decompose coarse-graining into fine-graining observables, we know that the number of independent generators is equal to the number of fine-graining observables. Therefore if the set of commuting Pauli operators has the number of independent generators that equals the number of fine-graining observables, then the state is pure. Indeed fine-graining observables are associated to pure states. 
In addition, in the field of quantum error correction it could be interesting to study if the coarse-graining observables have any usefulness. 

Finally, the enforced equivalence of SQM, ST and Gross' theory in odd dimensions can be exploited to address a given problem from different perspectives, where, depending on the cases, one theory can be more appropriate than another. An example is the already mentioned one of addressing protocols based on SQM with Spekkens theory instead of SQM or Wigner functions.

\section{Conclusion}

Spekkens' toy model is a very powerful model which has led to meaningful insights in the field of quantum foundations and that seems to have interesting applications in the field of quantum computation. We have extended it from prime to arbitrary dimensional systems and we have derived measurement updating rules for systems of prime dimensions when the state and measurement commute, equations \eqref{ultimate}\eqref{represontic}, when they do not, equations \eqref{SubNonCom}\eqref{represontic}, and for systems of non-prime dimensions (theorem \ref{FinalbisTheorem}). These results directly derive from the basic axiom of the theory: the classical complementarity principle. The latter characterises a structure for the updating rules which is the same as in stabilizer quantum mechanics: the state after measurement is composed by the generators of the measurement and the compatible (\emph{i.e.} commuting) generators of the original state.

Spekkens showed the equivalence between SQM and ST in odd prime dimensions via Gross' Wigner functions. We have extended this result to all odd dimensions and we have translated the updating rules of ST in terms of Wigner functions (theorems \ref{CommutingTheoremWF}, \ref{MainTheorem}, \ref{FinalWignerTheorem}).
We stress again that Spekkens' model and our measurement updating rules hold in all dimensions, in even dimensions too. However the equivalence between ST and SQM only holds in odd dimensions. The main reason is that SQM in even dimensions shows contextuality, while ST does not. One of the main future challenges is to find a hidden variable toy model which is also equivalent to qubit SQM.

We treat the problem with systems of non-prime dimensions, which arises from the problem of defining an inverse in $\mathbb{Z}_d,$ by decomposing the problematic (coarse-graining) observables in terms of the non-problematic (fine-graining) ones. This approach naturally suggests the form of the updating rules.
By comparing the updating rules in the three mentioned theories we highlight the beauty and the elegance of this equivalence, where addition and removal of generators in SQM correspond to intersection and union in ST and product and randomization in GT. This correspondence is schematically depicted, for the prime-dimensional case, in table \ref{Final}. The non-prime case correspondence is represented in table \ref{Finalbis}. We believe that the fresh perspective gained by moving from one theory to another can give powerful new tools for new insights in the field of quantum computation.

\section{Acknowledgments}
We would like to thank Misja Steinmetz for suggesting Bezout's identity. This work was supported by EPSRC Centre for Doctoral Training in Delivering Quantum Technologies [EP/L015242/1].

\section*{References}

\end{document}